\shorttitle{Distribution of the Retransmissions of Bounded Documents} 
\def\qed{\hfill{$\Box$} \\}
\def \b{b}
\def \eqdef{\triangleq}
\def \E{{\mathbb  E}}
\def \Pr{{\mathbb P}}
\def \Phin{\Phi^{\leftarrow}}
\numberwithin{equation}{section}  
\newcommand{\mynote}[1]{\footnote{\hspace*{-14pt} #1}\par}
\begin{document}

\title{Distribution of the Number of Retransmissions of Bounded Documents} 

\authorone[Columbia University]{Predrag R. Jelenkovi\'c } 
\authorone[Columbia University]{Evangelia D. Skiani}

\mynote{\small This work is supported by the National Science Foundation grant number 0915784. }
\mynote{\small Extended abstract of this paper with preliminary results has appeared earlier in \cite{JS2012}.} 
\addressone{\small Department of Electrical Engineering, Columbia University, New York, NY 10027.}

\begin{abstract}
Retransmission-based failure recovery represents a primary approach in existing communication networks that guarantees data delivery in the presence of channel failures. Recent work has shown that, when data sizes have infinite support, retransmissions can cause long (-tailed) delays even if all traffic and network characteristics are light-tailed. In this paper we investigate the practically important case of bounded data units $0\le L_{\b}\le b$ under the condition that the hazard functions of the distributions of data sizes and channel statistics are proportional. To this end, we provide an explicit and uniform characterization of the entire body of the retransmission distribution $\Pr[N_{\b} > n]$ in both $n$ and $b$. Our main discovery is that this distribution can be represented as the product of a power law and Gamma distribution. This rigorous approximation clearly demonstrates the coupling of a power law distribution, dominating the main body, and the Gamma distribution, determining the exponential tail. Our results are validated via simulation experiments and can be useful for designing retransmission-based systems with the required performance characteristics. From a broader perspective, this study applies to any other system, e.g., computing, where restart mechanisms are employed after a job processing failure. 
\end{abstract}

\keywords{Retransmissions, restarts, channel with failures, truncated
distributions, power laws, Gamma distributions, heavy-tailed
distributions, light-tailed distributions}
 \ams{60F99}{60F10;60G99}

\section{Introduction}
Failure recovery mechanisms are employed in almost all engineering networks 
since complex systems of any kind are often prone to failures.
One of the most straightforward and widely used failure recovery mechanism is to simply restart the system and all of the interrupted jobs from the beginning after a failure occurs. It was first recognized in \cite{FS05,SL06} that such mechanisms may result in long-tailed (power law) delays even if the job sizes and failure rates are exponential. In \cite{PJ07RETRANS}, it was noted that the same mechanism is at the core of modern communication networks where retransmissions are used on all protocol layers to guarantee data delivery in the presence of channel failures. Furthermore,  \cite{PJ07RETRANS} shows that the power law number of retransmissions and delay occur whenever the hazard functions of the data and failure distributions are proportional. Hence, power laws may arise even if the data and channel failure distributions are both Gaussian. In particular, retransmission phenomena can lead to zero throughput and system instabilities, 
and therefore need to be carefully considered for the design of fault tolerant systems.

More specifically, in communication networks, retransmissions represent the basic building blocks for failure recovery in all network protocols that guarantee data delivery in the presence of channel failures. These types of mechanisms have been employed on all networking layers, including, for example, Automatic Repeat reQuest (ARQ) protocol (e.g., see Section~2.4 of \cite{BG92}) in the
data link layer where a packet is resent automatically in case of an error;  contention based ALOHA type protocols in the medium access control (MAC) layer that use random backoff and retransmission mechanism to recover data from collisions; end-to-end acknowledgment for  multi-hop transmissions in the transport layer; HTTP downloading scheme in the application layer, etc. It has been shown that several well-known retransmission based protocols in different
layers of networking architecture  can lead to power law delays,
e.g., ALOHA type protocols in MAC layer \cite{Jelen07ALOHA, JT09} and
end-to-end acknowledgments in transport layer \cite{Jelen07e2e,JT2007}
as well as in other layers \cite{PJ07RETRANS}. For other (non-retransmission) mechanisms that can give rise to heavy tails see \cite{JT10} and the references therein. In particular, the proportional growth/multiplicative models can result in heavy tails \cite{JT10, JO12}.

Traditionally, retransmissions were thought to follow light-tailed distributions (with rapidly decaying tails), namely geometric, which requires the further assumption of independence between data sizes and transmission error probability. However, these two are often highly correlated in most communication systems, meaning that longer data units have higher probability of error, thus violating the independence assumption. Recent work \cite{PJ07RETRANS,Jelen07ALOHA,Jelen07e2e,JT2007} has  shown that, when the data size distribution has infinite support, all retransmission-based protocols could cause heavy-tailed behavior and possibly result in zero throughput, regardless of how light-tailed the distributions of data sizes and channel failures are. Nevertheless, in reality, data sizes are usually upper bounded. For example, WaveLAN's maximum transfer unit is 1500 bytes, YouTube videos are of limited duration, e-mail attachments cannot exceed an upper limit, say 25MB, etc. This fact motivates us to investigate the transmission of bounded data and approximate uniformly the entire body of the resulting retransmission distribution as it transits from the power law to the exponential tail. 

We use the following generic channel with failures \cite{PJ07RETRANS} to model the preceding situations. 
This model was first introduced in \cite{FS05} in a different application context. 
The channel dynamics is described by the i.i.d. channel availability process $\{A$, $A_i\}_{i\geq 1}$, where the channel is continuously available during periods $\{A_i\}$ and fails between these periods. In each period of time that the channel becomes available, say $A_i$, we attempt to transmit the data unit of random size $L_b$. We focus on the situation when the data size has finite support on interval $[0,\b]$. 
If $L_b< A_i$, we say that the transmission is successful; otherwise, we wait for the next period
$A_{i+1}$ when the channel is available and attempt to retransmit
the data from the beginning. It was first recognized in
\cite{FS05} that this model results in power law distributions when
the distributions of $L\equiv L_\infty$ and $A$ have a matrix exponential
representation, and this result was rigorously proved and further
generalized in \cite{PJ07RETRANS,JT2007,Asmussen07}. A related study when $L = \ell$ is a constant and failure/arrival rates are time-dependent Poisson can be found in \cite{Asmussen10}.

It was discovered in \cite{PJ07RETRANS} that bounded data units result in truncated power law distributions for the number of retransmissions, see Example~3 in \cite{PJ07RETRANS}; see also Example~2 in \cite{Jelen07ALOHA}. Such distributions are characterized by a power law main body and an exponentially bounded tail. However, the exponential behavior appears only for very small probabilities, often meaning that the number of retransmissions of interest may fall inside the region where the distribution behaves as a power law. 
It was argued in Example~3 of \cite{PJ07RETRANS} that the power law region will grow faster than 
exponential if the distributions of $A$ and $L_b$ are lighter than exponential. 
The retransmissions of bounded documents were further studied in \cite{TS2010}, where
partial approximations of the distribution of the number of retransmissions on the logarithmic and exact scales 
were provided in Theorems~1 and 3 of \cite{TS2010}, respectively. In this paper, we present a uniform characterization
of the entire body of such a distribution, both on the logarithmic as well as the exact scale.

Specifically, let $N_b$ represent the number of retransmissions (until successful transmission) of a bounded random data unit of size $L_b \in [0,\b]$ on the previously described channel. In order to study the uniform approximation in both $n$ and $\b$ we construct a family of variables $L_\b$, such that $\Pr[L_\b \leq x] = \Pr[L \leq x] / \Pr[L \leq \b]$, for $0 \leq x \leq \b$ when $L = L_\infty$ is fixed. This scaling of $L_\b$ was also used in \cite{TS2010}.
For the logarithmic scale, our result, stated in Theorem~\ref{thm:2},
provides a uniform characterization of the entire body of $\log \Pr[N_b >n]$,
i.e., informally 
\[
\log \Pr[N_{\b}>n] \approx -\alpha \log n + n \log \Pr[A \leq \b]
\]
for all $n$ and $b$ sufficiently large when the hazard functions of $L$ and $A$ are linearly related as $\log \Pr[L>x] \approx \alpha \log  \Pr[A > x]$; see Theorem~\ref{thm:2} for the precise assumptions. Note that the first term in the preceding approximation corresponds to the 
power law part $n^{-\alpha}$ of the distribution, while the second part describes the exponential 
(geometric $ \Pr[A \leq \b]^n$) tail. Hence, it may be natural to define the transition point $n_b$ from the 
power law to the exponential tail as a solution to $n_\b \log \Pr[A \leq \b] \approx \alpha \log n_\b$.

In addition, under more restrictive assumptions, we discover a new exact 
asymptotic formula for the retransmission distribution that works uniformly for all large $n,b$. 
Surprisingly, the approximation admits an explicit form (see Theorems~\ref{thm:3} and~\ref{thm:4})
\begin{align} \label{eq:0}
\Pr[N_\b >n] \approx \frac{\alpha} {n^\alpha \ell(n \wedge \Pr[A>\b]^{-1})} \int_{-n \log \Pr[A \leq \b]}^{\infty} e^{-z} z^{\alpha-1} dz ,
\end{align}
where $x \wedge y = \min (x,y)$ and $\ell(\cdot)$ is a slowly varying function; note that the preceding integral is the incomplete Gamma function $\Gamma(x,\alpha)$.

Clearly, when $-n \log \Pr[A \leq b] \downarrow 0$, the
preceding approximation converges to a true power law $\Gamma(\alpha+1)/(\ell(n) n^\alpha)$. And, when $-n  \log(\Pr[A<b])\uparrow \infty$, approximation 
\eqref{eq:0}, by the property $\Gamma(x,\alpha) \sim e^{-x} x^{\alpha-1}$ as $x\rightarrow\infty$, 
has a geometric leading term $\Pr[A \leq \b]^n$.
Interestingly, for the special case when $\alpha$ is an integer and $\ell(x) \equiv 1$, one can compute the exact expression for $\Pr[N_b>n]$, see Proposition~\ref{prop:integer}. 
Furthermore, our results show that the length of the power law region increases as the corresponding distributions of $L$ and $A$ assume lighter tails.  
All of the preceding results are validated via simulation experiments in Section~\ref{s:sim}.
It is worth noting that our asymptotic approximations are in excellent agreement with the simulations.

This uniform approximation allows for a characterization of the entire body of the distribution $\Pr[N_\b >n]$, so that one can explicitly estimate the region where the power law phenomenon arises. Introducing the relationship between $n$ and $\Pr [A > \b]$ also provides an assessment method of efficiency and is important for diminishing the power law effects in order to achieve high throughput. 
Basically, when the power law region is significant, it could lead to nearly zero throughput  ($\alpha < 1$), implying that the system parameters should be more carefully adjusted in order to meet the new requirements. On the contrary, if the exponential tail dominates, the system performance is more desirable. Our analytical work could be applicable in network protocol design, 
possibly including data fragmentation techniques \cite{Jelen07frag, NL2009} and failure-recovery mechanisms.

Also, from an engineering perspective, our results further suggest that careful re-examination and possible redesign of retransmission based protocols in communication networks might be necessary. Specifically, current engineering trends towards infrastructure-less, error-prone wireless technology encourage the study of highly variable systems with frequent failures. In these types of systems, traditional approaches, e.g., blind data fragmentation, may be insufficient for achieving a good balance between throughput and resource utilization. For example, IP packets are lower bounded by the packet header of 20 bytes and cannot be more than 1500 bytes. Thus, it is not efficient to create very small packets since the 20-byte packet header carries no useful information. In fact, one may consider merging smaller packets to reduce the overhead and, hence, increase the efficiency. Overall, we consider a generic model when the maximum size of data units is limited, which, in general, can be used towards improving the design of future complex and failure-prone systems in many different applications.

The rest of the paper is organized as follows. After a detailed
description of the channel model in the next Subsection
\ref{s:model}, we present our main results in Section
\ref{s:results}. Then, Section~\ref{s:sim} contains simulation examples that verify our theoretical work. A number of technical proofs are postponed to Section~\ref{sec:4}.

\subsection{Description of the Channel}\label{s:model}
In this section, we formally describe our model and provide necessary definitions and notation.
Consider transmitting a generic data unit of random size $L_\b$ over a
channel with failures. Without loss of generality, we assume that
the channel is of unit capacity. As stated in the introduction, the channel dynamics is modeled by the channel availability process $\{A, A_i\}_{i\geq 1}$, where the channel is continuously available during time periods $\{A_i\}$ whereas it fails between such periods. In each period of time that the channel becomes
available, say $A_i$, we attempt to transmit the data unit and, if
$A_i > L_\b$, we say that the transmission was successful; otherwise,
we wait for the next period $A_{i+1}$ when the channel is available
and attempt to retransmit the data from the beginning.
  A sketch of the
  model depicting the system is drawn in Figure \ref{fig:tc}.
 \begin{figure}[ht]
\centering
\begin{picture}(220,70)(0,0)
\put(0,0){\includegraphics[scale = 0.4, trim = 10mm 70mm 30mm 50mm, clip ]{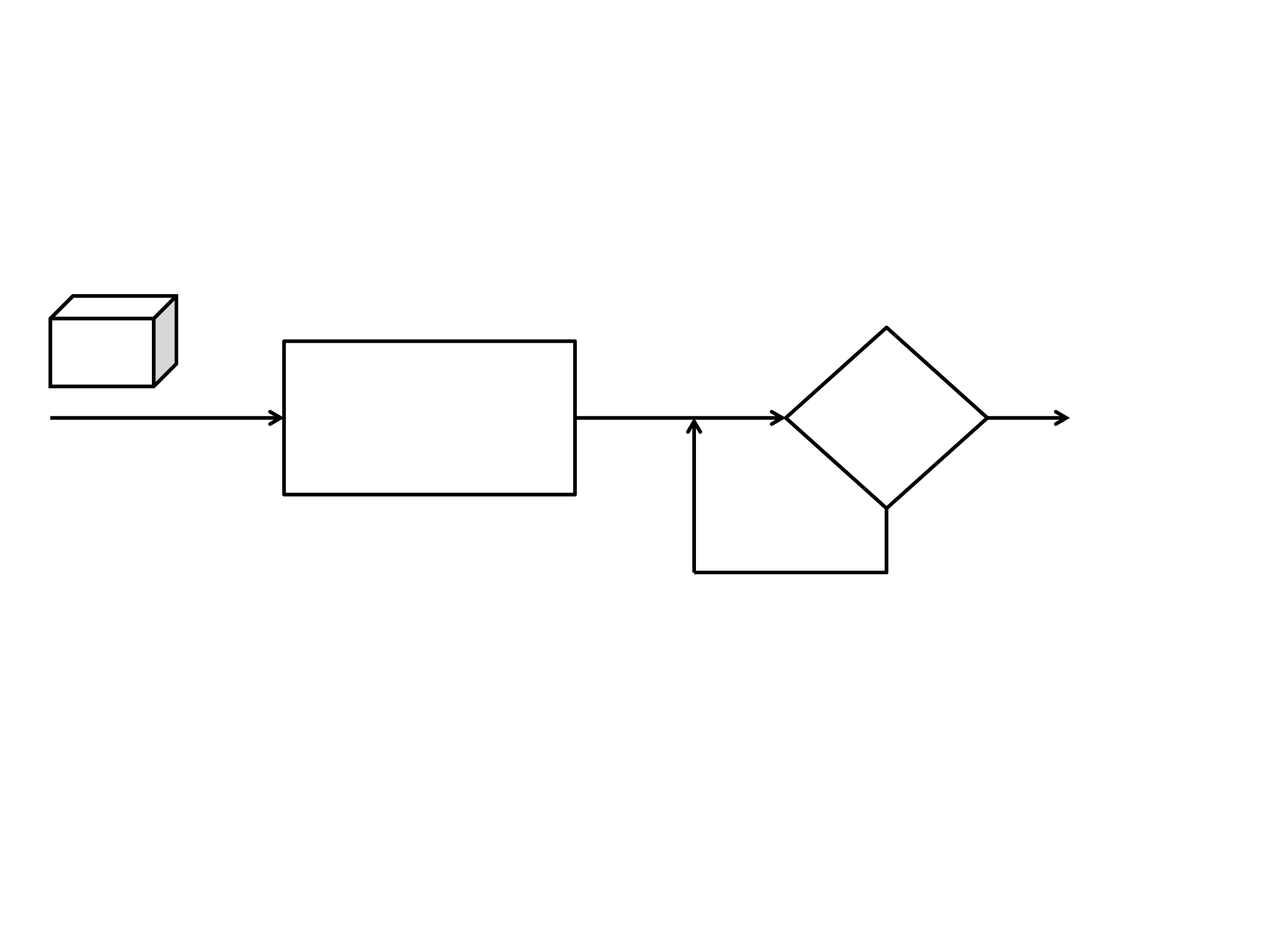}}
\put(9,55){\small $L_b$}
\put(58,50){\small Failure-prone }
\put(70,40){\small channel }
\put(72,30){\small $\{A_n\}$ }
\put(173,42){\small $A_n > L_b$}
\put(154,10){\small resend   $\quad$      no}
\end{picture}
  \caption{Documents sent over a channel with failures}
\label{fig:tc}
\end{figure}

We are interested in computing the number of attempts $N_\b$ (retransmissions) that is required until $L_\b$ is successfully transmitted, which is formally defined as follows. 
 \begin{definition}\label{def:NT}
 The total number of retransmissions for a generic data unit of length $L_\b$ is defined
 as
 \begin{equation*}
  N_\b \eqdef  \inf \{ n: A_n> L_\b \}.
  \end{equation*}
 \end{definition}
 We  denote
 the complementary cumulative distribution functions for $A$ and
 $L$, respectively, as
  \begin{equation*}
  \bar{G}(x)\eqdef \Pr[A>x]  \quad \text{and} \quad
  \bar{F}(x)\eqdef \Pr[L>x],
  \end{equation*}
  where $L$ is a generic random variable that is used to define the distribution of $L_\b$.

Throughout the paper we assume that $L$ and $A$ are continuous (equivalently, $\bar{F}(x)$ and $\bar{G}(x)$ are absolutely continuous) and have infinite support, i.e., $\bar{G}(x) >0$ and $\bar{F}(x) >0$ for all $x \ge 0$. Then, the distribution of $L_\b$ is defined as

\begin{equation} \label{eq:defF}
\Pr[L_\b  \leq x ] = \frac{\Pr[L \leq x]}{\Pr[L \leq b]}, \quad 0 \leq x \leq \b.
\end{equation}
To avoid trivialities, we assume that $\b$ is large enough such that $\Pr[L \leq \b] >0$.

In this paper we use the following standard notations. For any two real functions $a(t)$ and $b(t)$ and fixed $t_0 \in \mathbb{R} \bigcup \{\infty \}$, we use 
$a(t)\sim  b(t) \text{ as } t \rightarrow t_0 $ to denote $\lim_{t \rightarrow t_0} a(t)/b(t) =1$. Similarly, we say that $a(t) \gtrsim b(t)$ as $t \rightarrow  t_0$ if $\lim \inf _{t \rightarrow t_0}$ $ [a(t)/b(t)] \geq1$; $a(t) \lesssim b(t)$ has a complementary definition.

\section{Main Results} \label{s:results}
In this section, we present our main results. Under mild conditions, we first prove a general upper bound for the distribution of $N_\b$ on the logarithmic scale in Proposition~\ref{prop:UB}. 
In Theorem~\ref{thm:2}, we present our first main result, which under more stringent assumptions, characterizes the entire body of the distribution on the logarithmic scale uniformly for all large $n$ and $\b$, i.e., informally we show that $$\log \Pr[N_\b >n] \approx - \alpha \log n + n \log \Pr[A \leq b],$$as previously mentioned in the introduction. Roughly speaking, when $-\log \Pr[A \leq \b] = o(\log n/ n)$, $\Pr[N_\b >n]$ is a power law of index $\alpha$. Our results on the exact asymptotics are given in the next Subsection~\ref{sec:exact} in Theorems~\ref{thm:3} and~\ref{thm:4}; the results are stated in two different theorems since Theorem~\ref{thm:4} requires slightly stronger assumptions. The uniform approximation implied by these two theorems is presented in \eqref{eq:approx}, or previously in \eqref{eq:0}. 

Recall that the distribution of $L_{\b}$ has finite support on $[0,b]$, given by \eqref{eq:defF}. First, we prove the following general upper bound. 

\begin{proposition}\label{prop:UB}
Assume that 
\begin{displaymath}
\liminf_{x \rightarrow \infty} \frac { \log \Pr[L > x ] }{  \log  \Pr[A  > x ]} \geq \alpha
\end{displaymath}
and let $b_0$ be such that $\Pr[L \leq b_0] >0, \Pr[A \leq \b_0]>0$, then for any $\epsilon > 0$, there exists $n_0$, such that, for all $ n \geq n_0, \b \geq \b_0  $,
$$\log \Pr[N_\b > n]  \leq   (1-\epsilon) \left[ n \log \Pr[A \leq \b] - \alpha \log n \right].$$
\end{proposition}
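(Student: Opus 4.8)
The plan is to reduce everything to one integral estimate by conditioning on $L_\b$. Given $L_\b=x$, the successive attempts succeed independently with probability $\Pr[A>x]=\bar G(x)$, so $\Pr[N_\b>n\mid L_\b=x]=(1-\bar G(x))^n$, and hence by \eqref{eq:defF}
\[
\Pr[N_\b>n]=\frac{1}{\Pr[L\le \b]}\int_{[0,\b]}(1-\bar G(x))^n\,\Pr[L\in dx].
\]
The trivial bound $(1-\bar G(x))^n\le(1-\bar G(\b))^n=\Pr[A\le \b]^n$ on $[0,\b]$ already supplies the geometric factor for all $n$ and $\b$; the work is to combine it with the hazard hypothesis so as to peel off the extra $n^{-\alpha}$. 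To that end I would fix $\epsilon\in(0,1)$ --- we may assume $\alpha>0$, since for $\alpha\le0$ the claim is immediate from the trivial bound --- set $\alpha'\eqdef\alpha(1-\epsilon/2)<\alpha$, and use the $\liminf$ assumption to pick $x_1$ so large that $\bar F(x)\le\bar G(x)^{\alpha'}$ for all $x\ge x_1$ (this is precisely the statement that $\log\bar F(x)/\log\bar G(x)\ge\alpha'$ while $\log\bar G(x)<0$ for such $x$).

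Next I would split the integral at $x_1$. If $\b\le x_1$ the trivial bound alone suffices: $\Pr[N_\b>n]\le(1-\bar G(\b))^n$, and since $1-\bar G(\b)\le1-\bar G(x_1)<1$ a comparison of exponential and polynomial decay puts this below $\Pr[A\le\b]^{(1-\epsilon)n}n^{-\alpha(1-\epsilon)}$ for all large $n$, uniformly. For $\b>x_1$, on $[0,x_1]$ the integrand is at most $(1-\bar G(x_1))^n$, a geometric term with base strictly below $\Pr[A\le\b]=1-\bar G(\b)$, and since $(1-\bar G(x_1))/(1-\bar G(\b))^{1-\epsilon}$ stays bounded away from $1$ uniformly in $\b>x_1$, this part is dominated by $\Pr[A\le\b]^{(1-\epsilon)n}n^{-\alpha(1-\epsilon)}$ once $n$ is large. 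On $(x_1,\b]$ I would integrate by parts (writing $\Pr[L\in dx]=-d\bar F(x)$; the boundary terms are either non-positive, at $\b$, or another geometric contribution of base $1-\bar G(x_1)$, at $x_1$, both harmless), insert $\bar F(x)\le\bar G(x)^{\alpha'}$, and change variables $u=\bar G(x)$ to reach the Beta-type integral
\[
n\int_{\bar G(\b)}^{\bar G(x_1)}u^{\alpha'}(1-u)^{n-1}\,du .
\]

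The crux is to bound this integral so that the geometric and the power-law factors come out simultaneously. Using $(1-u)^{n-1}\le(1-u)^n/(1-\bar G(x_1))$ and then $(1-u)^n=\Pr[A\le\b]^n\bigl(1-(u-\bar G(\b))/\Pr[A\le\b]\bigr)^n\le\Pr[A\le\b]^n e^{-n(u-\bar G(\b))}$, the substitution $v=u-\bar G(\b)$ bounds the integral by a constant times $\Pr[A\le\b]^n e^{n\bar G(\b)}n^{-\alpha'}\Gamma(\alpha'+1,n\bar G(\b))$, where $\Gamma(\cdot,\cdot)$ is the upper incomplete Gamma function; since $\Pr[A\le\b]^n e^{n\bar G(\b)}\le1$ and $\Gamma(\alpha'+1,a)\le C_{\alpha'}(a^{\alpha'}+\Gamma(\alpha'+1))e^{-a}$, the tail piece is at most a constant multiple of $\Pr[A\le\b]^n\bigl(\bar G(\b)^{\alpha'}+n^{-\alpha'}\bigr)$. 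It then remains to verify that each surviving term --- $(1-\bar G(x_1))^n$, $\Pr[A\le\b]^n n^{-\alpha'}$, and $\Pr[A\le\b]^n\bar G(\b)^{\alpha'}$ --- is eventually below $\exp\{(1-\epsilon)[n\log\Pr[A\le\b]-\alpha\log n]\}=\Pr[A\le\b]^{(1-\epsilon)n}n^{-\alpha(1-\epsilon)}$, uniformly in $\b\ge\b_0$; the first two are routine, the choice $\alpha'=\alpha(1-\epsilon/2)$ leaving a spare factor $n^{-\alpha\epsilon/2}\to0$. I expect the main obstacle to be the uniformity in $\b$ of the third term: for a fixed $\b$ it looks as though $\Pr[A\le\b]^n\bar G(\b)^{\alpha'}$ carries only geometric decay and no power of $n$, but setting $t=n\bar G(\b)$ and using $\Pr[A\le\b]^{\epsilon n}\le e^{-\epsilon n\bar G(\b)}$ turns its ratio with the target into $e^{-\epsilon t}t^{\alpha'}n^{-\alpha\epsilon/2}\le\bigl(\sup_{t\ge0}e^{-\epsilon t}t^{\alpha'}\bigr)n^{-\alpha\epsilon/2}\to0$ at a rate independent of $\b$ --- the smallness of $\bar G(\b)$ for large $\b$ providing exactly the polynomial gain the fixed-$\b$ picture appears to miss.
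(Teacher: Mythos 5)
Your argument is correct, and it reaches the bound by a genuinely different mechanism than the paper. The paper's proof, after the same reduction $\Pr[N_\b>n]=\E[1-\bar{G}(L_\b)]^n$ and the same choice of a threshold beyond which $\bar F(x)\le\bar G(x)^{\alpha(1-\epsilon)}$, splits the \emph{exponent} as $n=n(1-\epsilon)+n\epsilon$: the factor $(1-\bar G(L_\b))^{n(1-\epsilon)}\le \Pr[A\le\b]^{n(1-\epsilon)}$ is pulled out immediately, and the remaining factor with power $n\epsilon$ is bounded by extending the integral over $x$ to all of $[0,\infty)$, using the uniformity of $\bar F(L)$ and $1-x\le e^{-x}$ to land on a \emph{complete} Gamma integral $\Gamma(\alpha(1-\epsilon))(n\epsilon)^{-\alpha(1-\epsilon)}$ that no longer depends on $\b$; uniformity in $\b$ is then automatic, at the price of only carrying the geometric factor to the power $n(1-\epsilon)$ and an $\epsilon$-dependent constant. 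You instead keep the full power $\Pr[A\le\b]^n$, integrate by parts to the Beta-type integral, and retain the $\b$-dependence through the incomplete Gamma function with argument $n\bar G(\b)$, which forces the extra uniformity check for the term $\Pr[A\le\b]^n\bar G(\b)^{\alpha'}$ — the $t=n\bar G(\b)$ substitution you give, with $\sup_{t\ge0}e^{-\epsilon t}t^{\alpha'}<\infty$ and the spare factor $n^{-\alpha\epsilon/2}$, resolves exactly the point where a fixed-$\b$ reading of that term looks problematic, and the remaining case analysis ($\b\le x_1$, the $[0,x_1]$ piece, boundary terms of the integration by parts) is sound. What your route buys is a sharper intermediate estimate, $\Pr[A\le\b]^n(\bar G(\b)^{\alpha'}+n^{-\alpha'})$ up to constants, which already exhibits the incomplete-Gamma/geometric structure that the paper only develops later in Theorems~\ref{thm:3} and~\ref{thm:4}; what the paper's exponent-splitting buys is brevity and the fact that uniformity in $\b$ never needs to be argued separately. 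Both proofs absorb the constants (including $\Pr[L\le\b_0]^{-1}$) on the logarithmic scale using the slack left by taking a strictly smaller exponent, so your final step of checking each surviving term against the target $\Pr[A\le\b]^{(1-\epsilon)n}n^{-\alpha(1-\epsilon)}$ is the right way to close.
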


\begin{remark}
Note that this result can be restated as 
\begin{displaymath}
\Pr[N_\b > n]  \leq   \Pr[A \leq \b]^{n(1-\epsilon)} n^{-\alpha(1-\epsilon)},
\end{displaymath}
for $n, \b$ sufficiently large. Hence, the distribution $\Pr[N_\b > n]$ is bounded by the product of a power law and a geometric term. 
\end{remark}

\begin{proof} By assumption, there exists $0<\epsilon <1 $ such that for all $x>x_{\epsilon} \geq \b_0 > 0,$
\begin{equation}\label{eq:prop1} 
 \bar{F}(x) \leq \bar{G}(x)^{\alpha(1-\epsilon)}. 
\end{equation}

\noindent Next, it is easy to see that $\Pr[N_\b >n |L_{\b} ] = (1-\bar{G}(L_{\b}))^n $, and thus,
\begin{align*}
\Pr[N_\b > n]&= \E [1-\bar{G}(L_\b)]^n \\
& =  \E [1-\bar{G}(L_\b)]^{n(1-\epsilon + \epsilon)} \\
& \leq  (1-\bar{G}(\b))^{n(1-\epsilon)}  \left[ \E[1-\bar{G}(L_\b)]^{n \epsilon} 1(L_\b \leq x_\epsilon) +  \E[1-\bar{G}(L_\b)]^{n \epsilon}1(L_\b > x_\epsilon) \right] \\
& \leq  (1-\bar{G}(\b))^{n(1-\epsilon)} \left[ (1-\bar{G}(x_\epsilon))^{n\epsilon} + \int_{x_\epsilon}^\b{\left(1-\bar{G}(x)\right)^{n \epsilon}  \frac{dF(x)}{F(\b)}} \right] \\
& \leq  (1-\bar{G}(\b))^{n(1-\epsilon)} \left[ \eta_{x_\epsilon}^{n\epsilon} \vphantom{ \int_{0}^\b} +  \int_{0}^\b{\left(1-\bar{F}(x)^{\frac{1}{\alpha(1-\epsilon)}}\right)^{n \epsilon}\frac{dF(x)}{F(\b)}} \right] ,  \end{align*}
where $\eta_{x_\epsilon} = 1-\bar{G}(x_\epsilon),$ and the last inequality follows from \eqref{eq:prop1}; in case $x_\epsilon \geq \b \geq \b_0$, the integral in the second inequality is zero and the last inequality trivially holds. Now, by extending the preceding integral to $\infty$, we obtain
\begin{align*}
\Pr[N_\b > n] & \leq  \frac{1}{F(\b)}(1-\bar{G}(\b))^{n(1-\epsilon)} \left[ \eta_{x_\epsilon}^{n\epsilon}F(\b) \vphantom{ \int_{0}^\b} + \int_{0}^{\infty}{\left(1-\bar{F}(x)^{\frac{1}{\alpha(1-\epsilon)}} \right)^{n \epsilon}dF(x)} \right] \\
& =  \frac{1}{F(\b)}(1-\bar{G}(\b))^{n(1-\epsilon)}  \left[ \eta_{x_\epsilon}^{n\epsilon} F(\b) \vphantom{ \int_{0}^\b}  +   \E \left (1-\bar{F}(L)^{\frac{1}{\alpha(1-\epsilon)}} \right) ^{n \epsilon}  \right] \\
& \leq  \frac{1}{F(\b)}(1-\bar{G}(\b))^{n(1-\epsilon)}  \left[  \eta_{x_\epsilon}^{n\epsilon} F(\b) \vphantom{ \int_{0}^\b}  +  \E  \; e^{-\bar{F}(L)^{\frac{1}{\alpha(1-\epsilon)}}n \epsilon} \right],\\
\intertext{where we use the elementary inequality $1-x \leq e^{-x}$, $x\geq 0$, and thus}
\Pr[N_\b > n] & \leq  \frac{1}{F(\b)}(1-\bar{G}(\b))^{n(1-\epsilon)}  \left[  \eta_{x_\epsilon}^{n\epsilon} F(\b) +\E  \; e^{-U^{\frac{1}{\alpha(1-\epsilon)}}n \epsilon} \right], 
\intertext{by $\bar{F}(L) = U$, where $U$ is uniformly distributed on $[0,1]$ by Proposition 2.1 in Chapter 10 of \cite{Ross02}; see also the Appendix. Hence,} 
\Pr[N_\b > n] & \leq \frac{1}{F(\b)}(1-\bar{G}(\b))^{n(1-\epsilon)}  \left[  \eta_{x_\epsilon}^{n\epsilon} F(\b) \vphantom{ \int_{0}^\b}  +  \int_{0}^{1}  {e^{-x^{\frac{1}{\alpha(1-\epsilon)}}n \epsilon} dx} \right]\\
& =  \frac{1}{F(\b)}(1-\bar{G}(\b))^{n(1-\epsilon)}  \left[  \eta_{x_\epsilon}^{n\epsilon} F(\b)  \vphantom{ \int_{0}^\b}  +\int_{0}^{n\epsilon}  {\frac{\alpha(1-\epsilon)}{(n\epsilon)^{\alpha(1-\epsilon)}} e^{-z} z^{\alpha(1-\epsilon)-1} dz} \right]\\
& \leq \frac{1}{F(\b)}(1-\bar{G}(\b))^{n(1-\epsilon)}  \left[  \eta_{x_\epsilon}^{n\epsilon} F(\b)  \vphantom{ \int_{0}^\b}  +  \frac{\alpha(1-\epsilon)} {(n\epsilon)^{\alpha(1-\epsilon)}}  \int_0^{\infty} { e^{-z} z^{\alpha(1-\epsilon)-1} dz} \right]\\
& =  \frac{1}{F(\b)}(1-\bar{G}(\b))^{n(1-\epsilon)}  \left[  \eta_{x_\epsilon}^{n\epsilon} F(\b)  \vphantom{ \int_{0}^\b}+ \frac{\alpha(1-\epsilon)} {(n\epsilon)^{\alpha(1-\epsilon)}}   \Gamma(\alpha(1-\epsilon)) \right], \\
\intertext{which follows from the definition of the Gamma function $\Gamma(a)=\int_{0}^{\infty} e^{-t} t^{a-1} dt$. Therefore,}
\Pr[N_\b > n] & \leq  (1-\bar{G}(\b))^{n(1-\epsilon)}  \left[  \eta_{x_\epsilon}^{n\epsilon} +  \frac{\alpha(1-\epsilon)} {F(\b) (n\epsilon)^{\alpha(1-\epsilon)}}   \Gamma(\alpha(1-\epsilon)) \right]\\
& \leq  (1-\bar{G}(\b))^{n(1-\epsilon)}  \left[  \eta_{x_\epsilon}^{n\epsilon} +  \frac{\alpha(1-\epsilon) \epsilon^{-\alpha(1-\epsilon)}} {F(\b_0) n^{\alpha(1-\epsilon)}}   \Gamma(\alpha(1-\epsilon)) \right]\\
& = (1-\bar{G}(\b))^{n(1-\epsilon)}  \left[  \eta_{x_\epsilon}^{n\epsilon} +  \frac{H_{\epsilon}} { n^{\alpha(1-\epsilon)}}  \right],
\end{align*}
since $\b \geq \b_0$, whereas, in the last inequality, we set $H_{\epsilon} =\alpha(1-\epsilon) \epsilon^{-\alpha(1-\epsilon)} \Gamma(\alpha(1-\epsilon))  / F(\b_0)$.

\noindent Now, we can choose $n_0$, such that for any $\epsilon > 0$ and for all $n \geq n_0 $, $\eta_{x_\epsilon}^{n\epsilon}  \leq \epsilon H_{\epsilon} n^{-\alpha(1-\epsilon)}$, so that
\begin{align*}
\Pr[N_\b>n] & \leq    (1-\bar{G}(\b))^{n(1-\epsilon)}  \left[  \epsilon   \frac{H_{\epsilon}} { n^{\alpha(1-\epsilon)}} +  \frac{H_{\epsilon}} { n^{\alpha(1-\epsilon)}}  \right] \\
& = (1-\bar{G}(\b))^{n(1-\epsilon)}   \frac{H_{\epsilon}} { n^{\alpha(1-\epsilon)}}  (1+\epsilon), 
\end{align*}
and by taking the logarithm in the preceding expression, we obtain
\begin{align*}
 \log \Pr[N_\b>n]  & \leq   \log {(H_{\epsilon} (1+\epsilon)) } + n(1-\epsilon) \log (1-\bar{G}(b))  - \alpha(1-\epsilon) \log n  \\
& = \log {(H_{\epsilon} (1+\epsilon) )} + (1-\epsilon) \left[ n \log (1-\bar{G}(b)) - \alpha \log n \right].
\end{align*}

Next, since $-n\log(1-\bar{G}(\b))>0$ and $\alpha \log n > 0$, $n > 1$,
\begin{align*}
\frac{\log \Pr[N_\b>n]}{-n\log(1-\bar{G}(\b)) + \alpha \log n} & \leq  \frac{\log {(H_{\epsilon} (1+\epsilon)) }}{-n\log(1-\bar{G}(\b)) + \alpha \log n} - (1-\epsilon) \\
 & \leq  \frac{\log {(H_{\epsilon} (1+\epsilon)) }}{ \alpha \log n} - (1-\epsilon) 
\end{align*}
and $\alpha \log n$ being increasing in $n$, we can choose $n_0$ such that for any $n  \geq n_0$, $$\frac{\log {(H_{\epsilon} (1+\epsilon)) }}{\alpha \log n} \leq \epsilon .$$
Thus,
\begin{equation}
\frac{\log \Pr[N_\b>n]}{-n\log(1-\bar{G}(\b)) + \alpha \log n}  \leq  - (1-2 \epsilon), 
\end{equation}
which completes the proof by replacing $\epsilon$ with $\epsilon /2.$\qed \end{proof}

Next, we determine the region where the power law asymptotics holds on the logarithmic scale.

\begin{theorem} \label{thm:1}
If
\begin{equation} \label{eq:clog} 
\log \Pr[L > x] \sim  \alpha \log \Pr[A > x] \qquad \text{as } x \rightarrow \infty, 
\end{equation}
$\alpha >0$, then, for any $\epsilon > 0$, there exists positive $n_0$, such that for all $n \geq n_0$, for which $n^{ 1+ \epsilon} \Pr[A>\b] \leq 1$, we have
\begin{equation} \label{eq:thm1-a}
\left| \frac{- \log \Pr[N_{\b} > n]}{ \alpha \log n} - 1 \right| \leq \epsilon.
\end{equation}
\end{theorem}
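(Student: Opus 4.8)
The plan is to obtain matching upper and lower bounds on $-\log\Pr[N_\b>n]$ in the regime $n^{1+\epsilon}\Pr[A>\b]\le 1$, showing both are $(1+o(1))\,\alpha\log n$. The upper bound on $\Pr[N_\b>n]$ (lower bound on $-\log\Pr[N_\b>n]$) comes essentially for free from Proposition~\ref{prop:UB}: the hypothesis \eqref{eq:clog} implies the $\liminf$ condition there, so $\log\Pr[N_\b>n]\le(1-\epsilon')[n\log\Pr[A\le\b]-\alpha\log n]$ for $n,\b$ large. Since $n\log\Pr[A\le\b]\le 0$, this already gives $-\log\Pr[N_\b>n]\ge(1-\epsilon')\alpha\log n$, which is the lower half of \eqref{eq:thm1-a}. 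Note the constraint $n^{1+\epsilon}\Pr[A>\b]\le 1$ is not even needed for this direction.

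The real work is the complementary bound: $\Pr[N_\b>n]\gtrsim n^{-\alpha(1+o(1))}$, i.e.\ $-\log\Pr[N_\b>n]\le(1+\epsilon')\alpha\log n$. Starting from $\Pr[N_\b>n]=\E[(1-\bar G(L_\b))^n]=\frac{1}{F(\b)}\int_0^\b(1-\bar G(x))^n\,dF(x)$, I would lower-bound by restricting the integral to a region where $x$ is moderately large (so that \eqref{eq:clog} applies and $\bar G(x)$ is small) but still well below $\b$. Concretely, using $1-\bar G(x)\ge e^{-\bar G(x)(1+o(1))}$ for $x$ large, one gets $\Pr[N_\b>n]\gtrsim\frac{1}{F(\b)}\int e^{-n\bar G(x)(1+o(1))}\,dF(x)$. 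Now substitute via the hazard-function relation: \eqref{eq:clog} gives $\bar F(x)=\bar G(x)^{\alpha(1+o(1))}$, so changing variables to $u=\bar F(x)$ (with $dF(x)=-du$, and $u$ uniform under $L$) turns this into roughly $\frac{1}{F(\b)}\int e^{-n\,u^{1/(\alpha(1+o(1)))}}\,du$ over an appropriate subinterval of $(0,1)$. The further substitution $z=n\,u^{1/(\alpha(1\pm\delta))}$ converts this to an incomplete Gamma integral $\asymp n^{-\alpha(1\pm\delta)}\int_0^{\cdot} e^{-z}z^{\alpha(1\pm\delta)-1}\,dz$, exactly mirroring the computation in the proof of Proposition~\ref{prop:UB}.

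The role of the constraint $n^{1+\epsilon}\Pr[A>\b]\le 1$ is precisely to guarantee that the upper limit of that incomplete Gamma integral stays bounded away from $0$ — equivalently, that the relevant range of $x$-values (where $n\bar G(x)=O(1)$) actually lies inside $[0,\b]$ rather than being truncated away by the finite support. Quantitatively, $n\bar G(x)\asymp 1$ forces $\bar G(x)\asymp 1/n$, hence $\bar F(x)\asymp n^{-\alpha}$, and one needs such $x$ to satisfy $x\le\b$; the assumption $n^{1+\epsilon}\bar G(\b)\le 1$ (together with regular-variation-type control from \eqref{eq:clog}) ensures there is a full order-one chunk of the Gamma integral captured inside the support, so the incomplete integral is bounded below by a positive constant. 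Then $-\log$ of the lower bound is $\alpha(1+\delta)\log n+O(1)+|\log F(\b)|$; since $F(\b)\to 1$, the $\log F(\b)$ term is harmless, and dividing by $\alpha\log n$ and sending $\delta\to0$, $n\to\infty$ yields the upper half of \eqref{eq:thm1-a}.

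I expect the main obstacle to be the bookkeeping around the $o(1)$ exponents coming from \eqref{eq:clog}: the relation $\log\bar F(x)\sim\alpha\log\bar G(x)$ only holds asymptotically, so one must carefully localize to $x\ge x_\delta$, replace $\alpha$ by $\alpha(1\pm\delta)$ on that range, track how this perturbs both the exponent of $n$ and the bounds of integration after the change of variables, and only at the end let $\delta\downarrow0$. A secondary technical point is verifying uniformity in $\b$ — but this is easy since all $\b$-dependence enters through the benign factor $1/F(\b)\le 1/F(\b_0)$ and through the single inequality $n^{1+\epsilon}\bar G(\b)\le1$, which is exactly the stated hypothesis.
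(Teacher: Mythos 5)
Your proposal is correct, and it splits the work exactly as the paper does: the half $-\log \Pr[N_\b>n] \ge (1-\epsilon)\alpha\log n$ is read off from Proposition~\ref{prop:UB} (with the small caveat that the constraint $n^{1+\epsilon}\bar{G}(\b)\le 1$ is still used there, mildly, to guarantee $\b\ge\b_0$ so that the proposition applies). Where you diverge is the lower bound on $\Pr[N_\b>n]$. The paper does this with a two-line elementary argument: pick $x_0<\b$ with $\bar{G}(x_0)=2/n$, bound $(1-2/n)^n\ge e^{-4}$, and bound $\Pr[x_0\le L\le\b]=\bar{F}(x_0)-\bar{F}(\b)\gtrsim n^{-\alpha(1+\epsilon/3)}$, the key point being the exponent comparison $(1+\epsilon)(1-\epsilon/3)>(1+\epsilon/3)$ which makes $\bar{F}(\b)$ negligible. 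You instead run the change of variables $u=\bar{F}(x)$ (uniformity of $\bar F(L)$, needing the standing absolute-continuity assumption) followed by $z=nu^{1/(\alpha(1\pm\delta))}$, producing an incomplete Gamma integral whose lower limit is $\le n^{1-(1+\epsilon)(1-\delta)/(1+\delta)}\to 0$ for $\delta$ small relative to $\epsilon$ --- which is the same exponent comparison in different clothing, and is essentially a lightweight version of the Theorem~\ref{thm:3} machinery. Your route is heavier but delivers a bit more (a constant-times-$n^{-\alpha(1+\delta)}$ lower bound, in the spirit of Corollary~\ref{cor:PL}), while the paper's single-interval bound avoids any change of variables and all $\delta$-bookkeeping except the one exponent inequality; one trivial slip on your side is that for this lower-bound direction the factor $1/F(\b)\ge 1$ is simply dropped, so the remark about $1/F(\b)\le 1/F(\b_0)$ is not the relevant one there.
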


Note that this result appeared in Theorem 1 of  \cite{TS2010}; for reasons of completeness, we present a simple \textbf{proof} in Section~\ref{sec:4}. 

This result holds in the region $n^{1+\epsilon} \leq O(1/\bar{G}(\b))$. Also, note that one can easily characterize the logarithmic asymptotics of the very end of the exponential tail of $\Pr[N_{\b} >n]$ for small $\b$ and large $n$. In particular, for fixed $\b$, it can be shown that $\log \Pr[N_{\b} >n] \sim n \log(1-\bar{G}(\b))$ as $n \rightarrow \infty$, see Theorem 1 in \cite{TS2010}. However, our objective in this paper is to determine the entire body of the distribution of $\Pr[N_{\b} >n]$ uniformly in $n$ and $\b$.

Next, we extend Theorem~\ref{thm:1} to the entire region $n \geq n_0, \b \geq \b_0$, which includes the geometric term $\Pr[A \leq \b]^n$. For this theorem, we need slightly more restrictive assumptions. The reason why this is the case is that $\Pr[N_\b > n]$ behaves like a power law in the region where $n = o(\log n/ \bar{G}(\b))$, while for $n >> \log n/ \bar{G}(\b)$, it follows essentially a geometric distribution; see Theorem~\ref{thm:2} below. Hence, more restrictive assumptions are required since the geometric distribution is much more sensitive to the changes in its parameters (informally, $((1+\epsilon)x)^{-\alpha} \approx x^{-\alpha}$ but $e^{-(1+\epsilon)x} \not\approx e^{-x}$).

\begin{definition}
A function $\ell(x)$ is slowly varying if $  \ell(x) / \ell(\lambda x)  \rightarrow 1$ as $x \rightarrow \infty$ for any fixed $\lambda >0$. 
\end{definition}
If not directly implied by our assumptions, $\ell(x)$ is assumed positive and locally bounded.

\begin{theorem} \label{thm:2}
If $\Pr[L > x] = \ell(\Pr[A> x]^{-1}) \Pr[A> x]^{\alpha} $, for $\alpha >0$, $\ell(x)$ slowly varying, then for any $\epsilon >0 $, there exist $n_0, \b_0,$ such that for all $n \ge n_0, \b \ge \b_0$,
\begin{equation}  \label{eq:thm2-a}
\left| \frac{-\log \Pr[N_{\b}>n]}{-n \log \Pr[A \leq \b]+ \alpha \log n} - 1 \right| \leq\epsilon.
\end{equation}
\end{theorem}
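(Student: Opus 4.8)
The plan is to prove \eqref{eq:thm2-a} by obtaining matching logarithmic lower and upper bounds for $\Pr[N_\b>n]$. One side is essentially free: the hypothesis $\bar F(x)=\ell(\bar G(x)^{-1})\bar G(x)^{\alpha}$ gives $\log\bar F(x)/\log\bar G(x)=\alpha+\log\ell(\bar G(x)^{-1})/\log\bar G(x)\to\alpha$ as $x\to\infty$, since $\log\ell(t)=o(\log t)$ for every slowly varying $\ell$; in particular $\liminf_{x\to\infty}\log\bar F(x)/\log\bar G(x)\ge\alpha$, so Proposition~\ref{prop:UB} (in the form stated in the Remark following it) gives $-\log\Pr[N_\b>n]\ge(1-\epsilon)\bigl[-n\log\Pr[A\le\b]+\alpha\log n\bigr]$ for all large $n,\b$, which is the side ``$\ge1-\epsilon$'' of \eqref{eq:thm2-a}. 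The substantive part is the matching lower bound on $\Pr[N_\b>n]$. Starting from $\Pr[N_\b>n]=F(\b)^{-1}\int_0^\b(1-\bar G(x))^n\,dF(x)$ (as in the proof of Proposition~\ref{prop:UB}) and using $F(\b)\le1$, I would bound $\Pr[N_\b>n]$ from below by restricting the integral to a suitable subinterval of $[0,\b]$, treating separately the regimes $n\bar G(\b)\le1$ and $n\bar G(\b)>1$; since each admissible pair $(n,\b)$ lies in exactly one of them, it suffices to handle each.

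When $n\bar G(\b)\le1$ the distribution behaves like a power law. The set $S_n\eqdef\{x\ge0:1/n\le\bar G(x)\le2/n\}$ is contained in $[0,\b]$ because $\bar G(\b)\le1/n$ and $\bar G$ is continuous and non-increasing; on $S_n$ one has $(1-\bar G(x))^n\ge(1-2/n)^n\ge e^{-3}$ for $n$ large, while the $F$-mass of $S_n$ is $\bar F(\inf S_n)-\bar F(\sup S_n)=\ell(n/2)(2/n)^\alpha-\ell(n)(1/n)^\alpha\ge\tfrac12(2^\alpha-1)\,\ell(n)\,n^{-\alpha}$ for $n$ large, using $\ell(n/2)\sim\ell(n)$. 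Hence $\Pr[N_\b>n]\ge\kappa\,\ell(n)\,n^{-\alpha}$ for some $\kappa>0$, so $-\log\Pr[N_\b>n]\le\alpha\log n-\log\ell(n)+|\log\kappa|\le(1+\epsilon)\alpha\log n$ for $n$ large (since $\log\ell(n)=o(\log n)$). As $-n\log\Pr[A\le\b]\ge0$, this is $\le(1+\epsilon)\bigl[-n\log\Pr[A\le\b]+\alpha\log n\bigr]$.

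When $n\bar G(\b)>1$ the geometric term dominates. Fix $\delta\eqdef\epsilon/4$ and choose $x_\b\in[0,\b)$ with $\bar G(x_\b)=(1+\delta)\bar G(\b)$, which exists for $\b$ large by the intermediate value theorem ($\bar G$ continuous, $\bar G(0)=1$, $\bar G(\b)\to0$). Restricting the integral to $[x_\b,\b]$ and using that $(1-\bar G(\cdot))^n$ is non-decreasing gives $\Pr[N_\b>n]\ge(1-(1+\delta)\bar G(\b))^n\bigl(\bar F(x_\b)-\bar F(\b)\bigr)$; writing $\bar F(x_\b)=(1+\delta)^\alpha\ell((1+\delta)^{-1}\bar G(\b)^{-1})\bar G(\b)^\alpha$ and using $\ell((1+\delta)^{-1}\bar G(\b)^{-1})\sim\ell(\bar G(\b)^{-1})$ as $\b\to\infty$, one gets $\bar F(x_\b)-\bar F(\b)\ge\tfrac12((1+\delta)^\alpha-1)\bar F(\b)$ for $\b$ large, so $\Pr[N_\b>n]\ge c_\delta\,\bar F(\b)\,(1-(1+\delta)\bar G(\b))^n$ with $c_\delta>0$. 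Taking logarithms, $-\log\Pr[N_\b>n]\le|\log c_\delta|-\log\bar F(\b)+n\bigl(-\log(1-(1+\delta)\bar G(\b))\bigr)$. Here $-\log\bar F(\b)=\alpha\log\bar G(\b)^{-1}-\log\ell(\bar G(\b)^{-1})\le(1+\tfrac\epsilon2)\alpha\log n$, since $\bar G(\b)^{-1}<n$ in this regime and $\log\ell(\bar G(\b)^{-1})=o(\log\bar G(\b)^{-1})$; and if $\b_0$ is taken so large that $\bar G(\b_0)$ is small enough that $-\log(1-(1+\delta)y)\le(1+2\delta)(-\log(1-y))$ for $0<y\le\bar G(\b_0)$, then $n\bigl(-\log(1-(1+\delta)\bar G(\b))\bigr)\le(1+\tfrac\epsilon2)(-n\log\Pr[A\le\b])$. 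Combining these and taking $n_0$ large to absorb $|\log c_\delta|$ gives $-\log\Pr[N_\b>n]\le(1+\epsilon)\bigl[-n\log\Pr[A\le\b]+\alpha\log n\bigr]$, which together with the first paragraph proves \eqref{eq:thm2-a}.

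The main obstacle is the geometric regime. Unlike the power law, which is unaffected by a bounded multiplicative distortion of its argument, the geometric factor $\Pr[A\le\b]^n=(1-\bar G(\b))^n$ is highly sensitive to replacing $\bar G(\b)$ by $(1+\delta)\bar G(\b)$; controlling this forces both a choice of $\delta$ small relative to $\epsilon$ and, crucially, a choice of $\b_0$ so large that $\bar G(\b_0)$ is small, because the ratio $-\log(1-(1+\delta)y)/(-\log(1-y))$ approaches $1+\delta$ only as $y\downarrow0$. The remaining issues --- uniform control of the slowly varying $\ell$, verification that $S_n\subseteq[0,\b]$ and the $F$-mass estimates hold with the same constants, and arranging that one pair $(n_0,\b_0)$ works in both regimes for every admissible $(n,\b)$ --- are routine.
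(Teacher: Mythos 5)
Your proposal is correct, and it shares the paper's overall skeleton: Proposition~\ref{prop:UB} gives $-\log \Pr[N_\b>n]\ge (1-\epsilon)\left[-n\log \Pr[A\le \b]+\alpha\log n\right]$, and the matching direction comes from splitting the $(n,\b)$ range according to $n\bar G(\b)$ and lower bounding $\E(1-\bar G(L_\b))^n$ by restricting to a thin band defined by level sets of $\bar G$, whose $F$-mass is estimated from the hypothesis and slow variation. The execution, however, is genuinely different in two respects. First, the decomposition: the paper cuts at $n\bar G(\b)=\delta\log n$ with $\delta=\epsilon\alpha/4$, using in the power-law regime the band $\bar G(x)\in[\delta\log n/n,\,2\delta\log n/n]$ and paying a factor $n^{-4\delta}=n^{-\alpha\epsilon}$ that is absorbed into the slack; you cut at $n\bar G(\b)=1$, so your band $\bar G(x)\in[1/n,2/n]$ costs only the constant $e^{-3}$, at the price that your geometric-regime argument must work all the way down to $n\bar G(\b)$ just above $1$ (which it does, since there $-\log$ of the band mass $c_\delta\bar F(\b)$ is at most $(1+\epsilon/2)\alpha\log n$ precisely because $\bar G(\b)^{-1}<n$). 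Second, the geometric regime itself: the paper changes variables via the uniformity of $\bar F(L)$, invokes the uniform convergence theorem for $\ell$, integrates $(1-z)^n z^{\alpha-1}$ over $[(1-\epsilon)\bar G(\b),\lambda(1-\epsilon)\bar G(\b)]$ with $\lambda>2$, and uses $n\bar G(\b)\ge\delta\log n$ to show the resulting difference of geometric terms retains at least half of its leading term, thereby keeping the geometric rate at $(1-\epsilon)\bar G(\b)$; you instead bound the integrand on $\bar G(x)\in[\bar G(\b),(1+\delta)\bar G(\b)]$ by its worst value and control the inflated rate through the elementary comparison $-\log(1-(1+\delta)y)\le(1+2\delta)(-\log(1-y))$ for $y$ small, which is exactly what forces your $\b_0$ to be large. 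Your route is more elementary (no change of variables, no integral evaluation, only pointwise slow variation at fixed ratios), while the paper's yields a slightly sharper geometric-regime lower bound than the logarithmic statement requires; both are complete proofs of the theorem.
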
 
\begin{proof}
See Section~\ref{sec:4}. 
\end{proof}

\begin{remark}
Note that the statement of this theorem can be formulated in an equivalent form 
\begin{equation*} 
\left| \frac{-\log \Pr[N_{\b}>n]}{n  \Pr[A > \b] + \alpha \log n} - 1 \right| \leq\epsilon,
\end{equation*}since $-n \log \Pr[A \leq \b] \sim n  \Pr[A > \b] $ as $\b \rightarrow \infty$.
\end{remark}

\begin{remark}
This theorem extends Theorem~1 in \cite{TS2010}. In particular, it proves the result uniformly in $n$ and $\b$, while Theorem~1 in \cite{TS2010} characterized the initial power law part of the distribution ($n \leq \bar{G}(\b)^{-\eta}, 0<\eta <1$) and the very end with exponential tail (fixed $\b$, $n \rightarrow \infty$). 
\end{remark}


\subsection{Exact Asymptotics}\label{sec:exact}
In this section, we derive the exact approximation for $\Pr[N_b >n]$ that works uniformly for all $n, \b$ sufficiently large (Theorems~\ref{thm:3} and \ref{thm:4}). As noted earlier in the introduction, this characterization is explicit in that it is a product of a power law and the Gamma distribution
\begin{equation} \label{eq:approx}
\Pr[N_\b >n] \approx \frac{\alpha} {n^\alpha \ell(n \wedge \Pr[A>\b]^{-1})} \int_{-n \log \Pr[A \leq \b]}^{\infty} e^{-z} z^{\alpha-1} dz ,
\end{equation}
where $x \wedge y = \min (x,y)$ and $\ell(\cdot)$ is slowly varying. Implicitly, the argument of $\ell(x)$ is altered depending on whether $n\Pr[A>\b]  \leq C$ or $n \Pr[A>\b] > C$ for some constant $C$. Hence, we can choose $C=1$ since $\ell(n \wedge 1/ \Pr[A>\b]) \approx \ell(n \wedge C /\Pr[A>\b])$ for large $n, \b$.
Note that when $-n\log \Pr[A \leq \b] \downarrow 0$, the power law dominates, whereas when $-n\log \Pr[A \leq \b] \rightarrow \infty$, the integral determines the tail with the geometric (exponential) leading term.

We would like to point out that approximation \eqref{eq:approx} actually works well when $\Pr[A>\b]^{-1}$ is large rather than $\b$; this can be concluded by examining the proofs of the theorems in this section. Hence, formula \eqref{eq:approx} can be accurate for relatively small values of $\b$ provided that $A$ is light-tailed. This may be the reason why we obtain accurate results in our simulation examples in Section~\ref{s:sim} for small values of $\b$.

First, in Theorem \ref{thm:3}, we precisely describe the region where the distribution of $N_\b$ exhibits the power law behavior, $n\Pr[A>\b] \leq C$, for any fixed constant $C$. Then, Theorem~\ref{thm:4} covers the remaining region, $n\Pr[A > \b] > C$, where $\Pr[N_\b > n]$ approaches the geometric tail. Additional discussion of the results and the treatment of some special cases are presented at the end of this section; see Propositions~\ref{prop:integer} and \ref{prop:exptail}.

\begin{theorem}\label{thm:3}
Let $\Pr[L>x]^{-1} = \ell (\Pr[A>x]^{-1})\Pr[A>x]^{-\alpha}$, $\alpha >0$, $x \geq 0$, and $C>0$ be a fixed constant. Then, for any $\epsilon >0$, there exists $n_0$ such that for all $n > n_0$, and $n \Pr[A > \b] \leq C$,
\begin{equation}\label{eq:thm21}
\left| \frac{\Pr[N_{\b} >n] n^\alpha \ell(n) } { \alpha \Gamma(-n \log \Pr[A \leq \b], \alpha)} -1  \right| \leq \epsilon,
\end{equation}
where $\Gamma(x, \alpha)$ is the incomplete Gamma function defined as $\int_{x}^{\infty} e^{-z} z^{\alpha-1} dz$.
\end{theorem}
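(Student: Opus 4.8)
The plan is to condition on the truncated data size, convert the resulting integral into an incomplete Gamma integral by a logarithmic change of variables, and extract the slowly varying prefactor by a dominated‑convergence argument combined with a subsequence reduction that takes care of the uniformity in $\b$.

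\textbf{Step 1 (reduction to a Stieltjes integral).} Since $\Pr[N_\b>n\mid L_\b]=(1-\bar G(L_\b))^n$ and $L_\b$ has the truncated law \eqref{eq:defF},
\[
\Pr[N_\b>n]=\E\,[1-\bar G(L_\b)]^n=\frac{1}{F(\b)}\int_0^\b (1-\bar G(x))^n\,dF(x).
\]
Substituting $t=\bar G(x)$ — equivalently, introducing $V\eqdef\bar G(L)$, whose law on $[0,1]$ is, by the hypothesis $\bar F(x)^{-1}=\ell(\bar G(x)^{-1})\bar G(x)^{-\alpha}$ and continuity of $\bar F,\bar G$,
\[
H(t)\eqdef\Pr[V\le t]=\bar F\big(\bar{G}^{\leftarrow}(t)\big)=\frac{t^{\alpha}}{\ell(1/t)},\qquad 0\le t\le 1
\]
(here $\bar G(0)=\bar F(0)=1$, as $A,L$ are nonnegative and continuous) — and noting $\{L_\b\le\b\}\leftrightarrow\{V\ge\bar G(\b)\}$, we obtain $\Pr[N_\b>n]=F(\b)^{-1}\int_{\bar G(\b)}^{1}(1-t)^n\,dH(t)$ with $F(\b)=1-\bar F(\b)\to1$. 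Integrating by parts ($(1-t)^n$ is $C^1$, $H$ is continuous and monotone) and then setting $z=-n\log(1-t)$,
\[
\int_{\bar G(\b)}^{1}(1-t)^n\,dH(t)=-(1-\bar G(\b))^n\bar F(\b)+\int_{a_\b(n)}^{\infty}e^{-z}\,H\big(1-e^{-z/n}\big)\,dz,
\]
where $a_\b(n)\eqdef-n\log(1-\bar G(\b))=-n\log\Pr[A\le\b]$ and the boundary term at $t=1$ vanishes.

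\textbf{Step 2 (asymptotics, the analytic core).} It suffices to show that along any sequence $(n_k,\b_k)$ with $n_k\to\infty$, $n_k\bar G(\b_k)\le C$ and $a_{\b_k}(n_k)\to a_*$ (such an $a_*\in[0,2C]$ exists on a subsequence, since $0\le a_\b(n)\le 2n\bar G(\b)$ for $\b$ large), one has $\Pr[N_{\b_k}>n_k]\,n_k^{\alpha}\ell(n_k)\to\alpha\Gamma(a_*,\alpha)$; the theorem then follows because $\alpha\Gamma(a_{\b_k}(n_k),\alpha)\to\alpha\Gamma(a_*,\alpha)>0$ and \eqref{eq:thm21} is a uniform statement. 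For the integral, $H(1-e^{-z/n})=(1-e^{-z/n})^{\alpha}/\ell\big(1/(1-e^{-z/n})\big)$; as $n\to\infty$, $n^{\alpha}(1-e^{-z/n})^{\alpha}\to z^{\alpha}$ and $1/(1-e^{-z/n})\sim n/z$, so $\ell(1/(1-e^{-z/n}))/\ell(n)\to1$ pointwise by slow variation. I dominate using Potter's bounds $\ell(\lambda x)/\ell(x)\le 2\max(\lambda^{\delta},\lambda^{-\delta})$ for $x,\lambda x$ large, with $0<\delta<\min(\alpha,1)$: this controls both the contribution of $z$ near the moving endpoint $a_\b(n)$ (which may tend to $0$, but $\int_0 z^{\alpha-\delta}\,dz<\infty$) and of large $z$ (killed by $e^{-z}$), so dominated convergence gives $n^{\alpha}\ell(n)\int_{a_\b(n)}^{\infty}e^{-z}H(1-e^{-z/n})\,dz\to\int_{a_*}^{\infty}e^{-z}z^{\alpha}\,dz=\Gamma(a_*,\alpha+1)$. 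The same estimate yields $(1-\bar G(\b))^n\bar F(\b)\,n^{\alpha}\ell(n)\to a_*^{\alpha}e^{-a_*}$ (using $(1-\bar G(\b))^n=e^{-a_\b(n)}$ and $n\bar G(\b)\to a_*$). Combining with $F(\b)\to1$,
\[
\Pr[N_\b>n]\,n^{\alpha}\ell(n)\longrightarrow \Gamma(a_*,\alpha+1)-a_*^{\alpha}e^{-a_*}=\alpha\,\Gamma(a_*,\alpha),
\]
by the recursion $\Gamma(x,\alpha+1)=x^{\alpha}e^{-x}+\alpha\Gamma(x,\alpha)$, which is exactly the claimed limit.

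\textbf{Main obstacle.} The delicate points are concentrated in Step 2. First, the exact cancellation of the integration‑by‑parts boundary term $-(1-\bar G(\b))^n\bar F(\b)$ against the summand $a_\b(n)^{\alpha}e^{-a_\b(n)}$ generated when $\Gamma(\cdot,\alpha+1)$ is rewritten as $\alpha\Gamma(\cdot,\alpha)$: this is precisely what upgrades the crude ``power law $\times\,\Gamma(\cdot,\alpha+1)$'' estimate into the sharp $\alpha\Gamma(-n\log\Pr[A\le\b],\alpha)$ of \eqref{eq:thm21}. Second, the uniform handling of the slowly varying factor $\ell$ over the entire regime $n\bar G(\b)\le C$, especially near the moving and possibly vanishing lower limit $a_\b(n)$; this is what forces the combination of Potter‑type bounds with the compactness/subsequence reduction rather than a direct estimate.
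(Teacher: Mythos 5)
Your proposal is correct in substance, but it reaches the result by a genuinely different route than the paper. The paper works directly with $\Phi(x)=\ell(x)x^{\alpha}$, replaces it by a smooth monotone version, inverts it to sandwich $\bar{G}(x)$ between $1/\Phi^{\leftarrow}((1\pm\epsilon)\bar{F}(x)^{-1})$, and then proves matching upper and lower bounds with explicit $\epsilon$-bookkeeping: a case split according to whether $n\bar{G}(\b)$ is below or above a small constant, and, for the upper bound, a blocking/dyadic sum ($I_2$ with weights $e^{-e^k}$) to control the contribution of small $L_\b$; uniformity in $\b$ comes out of these explicit estimates. You instead exploit that the hypothesis holds for \emph{all} $x\ge 0$, so the law of $V=\bar{G}(L)$ is exactly $H(t)=t^{\alpha}/\ell(1/t)$; integration by parts plus the substitution $z=-n\log(1-t)$ puts $\Pr[N_\b>n]$ in incomplete-Gamma form with a boundary term, the Gamma recursion $\Gamma(a,\alpha+1)=a^{\alpha}e^{-a}+\alpha\Gamma(a,\alpha)$ absorbs that boundary term exactly, and uniformity is obtained softly by the compactness/subsequence contradiction argument (possible because $a_\b(n)\in[0,2C]$ and $\Gamma(\cdot,\alpha)$ is continuous and bounded away from $0$ there), with dominated convergence and Potter bounds replacing the paper's hand estimates. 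Your approach is shorter and conceptually cleaner, and it isolates where the hypothesis ``for all $x$'' (rather than asymptotic equivalence) is used; the paper's approach is more robust in that its two-sided sandwich only needs the relation asymptotically and its explicit constants feed directly into the companion results (e.g., the lower bound for $I_1$ reused in the proof and in Theorem~\ref{thm:4}).

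One detail to patch: your dominating function via Potter's bound is only valid where both arguments are large, i.e.\ for $z\lesssim n/X_0$, since for $z$ of order $n$ or larger the argument $1/(1-e^{-z/n})$ stays bounded and Potter does not apply. That region is not ``dominated'' but is negligible: bound $H\le 1$ there, so its contribution is at most $n^{\alpha}\ell(n)e^{-n/X_0}\to 0$. Similarly, in the boundary term, the case $a_*=0$ needs the Potter estimate $\ell(n)/\ell(1/\bar{G}(\b))\le A\,(n\bar{G}(\b))^{-\delta}$ with $\delta<\alpha$ (as you set up), giving the bound $A\,(n\bar{G}(\b))^{\alpha-\delta}\to 0$. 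With these two routine additions the argument is complete.
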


\begin{remark}
Related result was derived in Theorem 3 of \cite{TS2010} where it was required that $n \leq \bar{G}(\b)^{-\eta}, 0< \eta < 1$. Note that here we broaden the region where the result holds by requiring $n \leq C / \bar{G}(\b)$, which is larger than $n \leq \bar{G}(\b)^{-\eta}$. Furthermore, this is the largest region where the exact power law asymptotics $O(n^{-\alpha}/\ell(n))$ holds since for $n\bar{G}(\b) > C,  \Gamma(n\bar{G}(\b), \alpha) \leq \Gamma(C, \alpha) \rightarrow 0$ as $C \rightarrow \infty$. 
\end{remark}

\begin{remark} \label{re:2}
Note here that the incomplete Gamma function $\Gamma(\alpha, x) = \int_{x}^{\infty}{z^{\alpha - 1}e^{-z}dz}$ can be easily computed using the well known asymptotic approximation (see Sections 6.5.32 in \cite{abram64}), as $x \rightarrow \infty$, 
\begin{displaymath}
\Gamma(\alpha, x) \sim x^{\alpha-1}e^{-x} \left[ 1+ \frac{\alpha-1}{x}+  \frac{(\alpha-1)(\alpha-2)}{x^2} + \dots \right].
\end{displaymath}
\end{remark}

\begin{proof}This proof uses some of the ideas from the proof of Theorem~2.1 in \cite{JT2007}. However, it is much more involved since one has to incorporate the assumption $n \Pr[A > \b] \leq C$, which ensures the power law body.

Let $\Phi(x) = \ell(x) x^{\alpha}$. Then, $\Phi(x)$ is regularly varying with index $\alpha$ and, thus, for any $c>0$,
\begin{equation*}\label{eq:drv}
\lim_{x\rightarrow \infty} \frac{\Phi(cx)}{\Phi(x)} = c^{\alpha} < \infty,
\end{equation*}
and, in particular, we can choose $c = e$, which implies that there exists $n_\epsilon$ such that for $n/e^k > n_\epsilon$,
\begin{align}\label{eq:drv1}
  \frac{\Phi(n)}{\Phi(n/e^k)} \leq e^{ k(\alpha + 1)}.
\end{align}
Without loss of generality, we may assume that $\Phi(.)$ is eventually absolutely continuous, strictly monotone and locally bounded for $x >0$  since we can always find an absolutely continuous and strictly monotone function 
\begin{equation} \label{eq:fstar}
\Phi^* (x) =\begin{cases} \alpha \int_{1}^x \Phi(s) s^{-1} ds,  & x \geq 1   \\
0, &  0 \leq x < 1, \end{cases} 
\end{equation}
which for $x$ large enough satisfies 
\begin{displaymath}
\bar{F}(x)^{-1} = \Phi (\bar{G}(x)^{-1})  \sim \Phi^* (\bar{G}(x)^{-1}).
\end{displaymath}
This implies that, for any $0 < \epsilon <1$ and $x \geq x_0$, we have
\begin{equation} \label{eq:main1}
1/ \Phin \left((1+\epsilon) \bar F (x)^{-1} \right)  \leq \bar{G}(x) \leq 1/ \Phin \left((1-\epsilon) \bar F (x)^{-1} \right),
\end{equation}
where $\Phin(\cdot)$ denotes the inverse function of $\Phi^*(\cdot)$; note that the monotonicity of $\Phi^*(x) $, for all $x \geq 1$, guarantees that its inverse exists. To simplify the notation in this proof, we shall use $\Phi(\cdot)$ to denote $\Phi^*(\cdot)$. Furthermore, $\Phin(\cdot)$ is regularly varying with index $1/ \alpha$ (see Theorem~1.5.12 in \cite{BG87}), implying that 
\begin{displaymath}
\Phin ( (1+\epsilon) x ) \sim \left( \frac{1+\epsilon}{1-\epsilon} \right)^{1/\alpha} \Phin ( (1-\epsilon) x) ,
\end{displaymath}
as $x \rightarrow \infty$. Therefore, for $\eta_\epsilon = \eta(\epsilon) =  \left[(1+\epsilon)/(1-\epsilon) \right]^{2/\alpha}$ and $x$ large,
\begin{equation} \label{eq:Phin}
\eta_\epsilon^{-1} \bar G (x) \leq 1/ \Phin \left((1+\epsilon) \bar F (x)^{-1} \right) \leq 1/ \Phin \left((1-\epsilon) \bar F (x)^{-1} \right) \leq \eta_\epsilon \bar G (x).
\end{equation} 

First, notice that the number of retransmissions is geometrically distributed given the data size $L_\b$,
\begin{align} \label{eq:thm3-0}
\nonumber \Pr[N_\b > n] & = \E [1-\bar{G}(L_\b)]^n \\
& =  \E [1-\bar{G}(L_\b)]^n \textbf{1}(L_\b \leq x_0)   + \E [1-\bar{G}(L_\b)]^n \textbf{1}(L_\b>x_0).
\end{align}


We begin with the \emph{lower bound}. For $H > C$ and $x_0$ as in \eqref{eq:thm3-0}, we choose $x_n > x_0$ such that $\Phin((1-\epsilon) \bar F(x_n)^{-1}) = n / H$, for $n$ large, and thus,
\begin{align*}
\Pr[N_\b > n] & = \E [1-\bar{G}(L_\b)]^n \\
 &  \geq  \E \left[(1-\bar{G}(L_\b))^n \textbf{1}(L_\b>x_n) \right]  \\
  &  \geq  \E \left[ \left(1- \frac{1}{\Phin((1-\epsilon)\bar{F}(L_\b)^{-1})} \right)^n \textbf{1}(L_\b  > x_n )   \right] \\
  &  =  \int_{x_n}^{\b} \left(1- \frac{1}{\Phin((1-\epsilon)\bar{F}(x)^{-1})} \right)^n  \frac{dF(x)}{F(\b)},  \\
  \intertext{where we use our main assumption \eqref{eq:main1}. Now, since $F(\b) \leq 1$ and using the continuity of $F(x)$ and change of variables $z= n/\Phin((1-\epsilon)\bar F(x)^{-1})$, we obtain, }
  \Pr[N_\b > n] & \geq \int_{n/\Phin((1-\epsilon)\bar{F}(\b)^{-1})}^H  \left(1- \frac{z}{n} \right)^n  \frac{\Phi'(n/z)}{\Phi^2(n/z)} \frac{(1-\epsilon)n}{z^2} dz \\
  & \geq \int_{ \eta_\epsilon n \bar G (\b)}^H  \left(1- \frac{z}{n} \right)^n  \frac{\Phi'(n/z)}{\Phi^2(n/z)} \frac{(1-\epsilon)n}{z^2} dz,
\end{align*}
where we use that $ \eta_\epsilon \bar G (\b) \geq  1/\Phin((1-\epsilon)\bar{F}(\b)^{-1})$ from \eqref{eq:Phin}, which holds for $\b$ large, or equivalently $n$ large by our assumption $n \bar G(\b) \leq C$. Now, we consider two distinct cases:  

If $ \eta_\epsilon n \bar G (\b) < h$, where $h>0$ is a small constant, then
\begin{align*}
\Pr[N_\b > n]  &  \geq  (1-\epsilon)  \int_{h}^{H}  \left(1- \frac{z}{n} \right)^n \frac{\Phi'(n/z)}{\Phi^2(n/z)} \frac{n}{z^2} dz \\
 & \geq  (1-\epsilon)^{3/2}  \frac{\alpha}{\Phi(n)} \int_{h}^{H} \left(1- \frac{z}{n} \right)^nz^{\alpha-1}  dz ,
 \intertext{where we use the properties of regularly varying functions that for all $h \leq z\leq H$ and large $n$,  
\begin{displaymath} 
 \frac{\Phi(n)}{ \Phi(n/z)} \geq (1-\epsilon)^{1/2} z^\alpha,
 \end{displaymath}
  and 
  \begin{displaymath} 
  \Phi'(n/z)/\Phi(n/z) =  \frac{\alpha z}{n}, \end{displaymath}  
  for $n > H $ [see \eqref{eq:fstar}]. Next, using $1-x \geq e^{-(1+\delta)x}$ for  $\delta >0$ and $0 \leq x \leq x_\delta$, for $n$ large enough ($n > H / x_\delta$) we obtain }
  \Pr [N_\b > n] & \geq  (1-\epsilon)^{3/2}  \frac{\alpha}{\Phi(n)} \int_{h}^{H} e^{-(1+\delta)z} z^{\alpha-1}  dz \\
& \geq  (1-\epsilon)^{3/2} e^{-\delta H} \frac{\alpha}{ \Phi(n)}  \int_{h}^{H} e^{-z} z^{\alpha-1}  dz, 
\intertext{and by choosing $\delta > 1/H$ so that $e^{-\delta H} \geq (1-\epsilon)^{1/2}$, we have}
 \Pr [N_\b > n] & \geq  (1-\epsilon)^2 \frac{\alpha}{ \Phi(n)}  \int_{h}^{H} e^{-z} z^{\alpha-1}  dz  \\
 & \geq  (1-\epsilon)^{2}  \frac{\alpha}{ \Phi(n)} \left[ \int_{n \bar G (\b)}^{H} e^{-z} z^{\alpha-1}  dz -  \int_{n \bar G (\b)}^{h} e^{-z} z^{\alpha-1}  dz  \right] \\
   & \geq  (1-\epsilon)^2  \frac{\alpha}{ \Phi(n)} \left[ \int_{n \bar G (\b)}^{H} e^{-z} z^{\alpha-1}  dz -  \int_{0}^{h} e^{-z} z^{\alpha-1}  dz  \right] 
   \end{align*}
\begin{align*}
    & \geq  (1-\epsilon)^{2}  \frac{\alpha}{\Phi(n)} \left[ \int_{n \bar G (\b)}^{\infty} e^{-z} z^{\alpha-1}  dz  - \int_{H}^{\infty} e^{-z} z^{\alpha-1}  dz -   \frac{h^{\alpha }}{\alpha}  \right] \\
       & \geq  (1-\epsilon)^{2}  \frac{\alpha}{\Phi(n)} \left[ \int_{n \bar G (\b)}^{\infty} e^{-z} z^{\alpha-1}  dz  - 2 e^{-H}H^{\alpha -1} -   \frac{h^{\alpha }}{\alpha}  \right] \\
       & =  (1-\epsilon)^{2}  \frac{\alpha}{\Phi(n)} \Gamma(n \bar G (\b), \alpha)  \left( 1  - \frac {2 e^{-H}H^{\alpha -1} +   h^{\alpha} /\alpha }{\Gamma(C, \alpha)} \right),
\end{align*}
where the second to last inequality follows from the approximation for the incomplete gamma function for large $H$ [see Remark~\ref{re:2} of Theorem~\ref{thm:3}] and the last inequality uses the assumption $n \bar G (\b) \leq C$. Now, picking $H, h$ such that $  2 e^{-H}H^{\alpha -1}  +  h^{\alpha }/\alpha \leq \epsilon  \Gamma(C, \alpha) $, yields 
\begin{equation*}
\Pr[N_\b > n]  \geq (1-3\epsilon) \frac{\alpha }{F(\b) \Phi(n)} \Gamma(n \bar{G}(\b), \alpha).
\end{equation*}

If $h \leq \eta_\epsilon n \bar G (\b) \leq C$, then
\begin{align*}
\Pr[N_\b > n]  & \geq  (1-\epsilon) \int_{ \eta_\epsilon n \bar G(\b)}^{H}e^{-z}   \frac{\Phi'(n/z)}{\Phi^2(n/z)} \frac{n}{z^2} dz \\
    & \geq  (1-\epsilon)^2 \frac{ \alpha}{ \Phi(n)} \int_{ \eta_\epsilon n \bar G(\b)}^{H}e^{-z}   z^{\alpha-1} dz,
    \end{align*}
which follows from the regularly varying properties in the region $h/  \eta_\epsilon < n \bar G(\b) \leq z \leq H$. 

For the preceding integral, similarly as before, we have
\begin{align*} 
\int_{ \eta_\epsilon n \bar G(\b)}^{H}e^{-z}   z^{\alpha-1} dz & =  \int_{ n \bar G(\b)}^{H}e^{-z}   z^{\alpha-1} dz  -  \int_{n \bar G(\b)}^{ \eta_\epsilon n \bar G(\b)}e^{-z}   z^{\alpha-1} dz \\
& \geq \int_{ n \bar G(\b)}^{H}e^{-z}   z^{\alpha-1} dz  -  \int_{n \bar G(\b)}^{\eta_\epsilon n \bar G(\b)}  z^{\alpha-1} dz \\
& = \int_{ n \bar G(\b)}^{H}e^{-z}   z^{\alpha-1} dz  -    (n \bar G(\b))^\alpha \frac{ \eta_\epsilon^\alpha-1}{\alpha} \\
& \geq \int_{ n \bar G(\b)}^{\infty}e^{-z}   z^{\alpha-1} dz  - \int_{H}^{\infty}e^{-z}   z^{\alpha-1} dz -    
C^\alpha \frac{ \eta_\epsilon^\alpha-1}{\alpha}  \\
& \geq \int_{ n \bar G(\b)}^{\infty}e^{-z}   z^{\alpha-1} dz  -  2 e^{-H}H^{\alpha -1} -  \frac{4 \epsilon C ^\alpha}{\alpha}  \\
& = \Gamma( n \bar G(\b),\alpha) \left(1  -  \frac{2 e^{-H}H^{\alpha -1} + 4 \epsilon C^\alpha / \alpha }{ \Gamma( C,\alpha)} \right),
\end{align*}
where we use the approximation for the incomplete gamma function for large $H$, that $\eta_\epsilon^\alpha - 1 \rightarrow 4 \epsilon$ as $\epsilon \rightarrow 0$ and $n \bar G(\b) \leq C$. Now, letting $H $ be such that $ 2 e^{-H}H^{\alpha -1} + 4 \epsilon C^\alpha / \alpha  \leq \sqrt \epsilon \Gamma(C, \alpha)$ yields 
\begin{equation*}
\Pr[N_\b > n]   \geq (1-\epsilon)^2 (1 - \sqrt \epsilon )  \frac{\alpha }{F(\b) \Phi(n)} \Gamma(n \bar{G}(\b), \alpha).
\end{equation*}

Finally, since $\bar G(\b) \leq - \log (1- \bar G (\b))$, we obtain 
\begin{equation} \label{eq:I1-lb}
\Pr[N_\b > n]  \geq (1-\epsilon)^2 (1 - \sqrt \epsilon ) \frac{\alpha }{F(\b) \Phi(n)} \Gamma(- n\log (1- \bar G (\b)), \alpha),
\end{equation}
which proves the lower bound after replacing $(1-\epsilon)^2 (1 - \sqrt \epsilon )$ with $1- \epsilon$.

Next, we derive the \emph{upper bound}. Note that for $x_0$ as in \eqref{eq:main1},
\begin{align} \label{eq:0all}
 \nonumber \Pr[N_\b > n]&= \E [1-\bar{G}(L_\b)]^n \\
 \nonumber &  \leq  (1-\bar G(x_0))^n +  \E [1-\bar{G}(L_\b)]^n \textbf{1}(L_\b>x_0)    \\
   &  \leq  e^{- n \bar G(x_0)} +  \E \left(1-\frac{1}{\Phin((1+\epsilon)\bar{F}(L_\b)^{-1})} \right)^n \textbf{1}(L_\b>x_0)  ,
\end{align}
which follows from \eqref{eq:main1} and the elementary inequality $1-x \leq e^{-x}$. Now, for any $H >  \max{(C, 1)}$, we obtain 
\begin{align}  \label{eq:1all}
\nonumber  & \Pr[N_\b > n]  \leq  e^{- n \bar G(x_0)}  +  \E \left(1-\frac{1}{\Phin((1+\epsilon)\bar{F}(L_\b)^{-1})} \right)^n \textbf{1} \left( L_\b > x_0 \right)   \\
\nonumber  & \leq  e^{- n \bar G(x_0)} +  \ \E \left(1-\frac{1}{\Phin((1+\epsilon)\bar{F}(L_\b)^{-1})} \right)^n \textbf{1} \left( \frac{1}{\Phin((1+\epsilon)\bar{F}(L_\b)^{-1})} < \frac{H}{n} \right)   \\
 \nonumber & + \sum_{k= \lfloor \log H \rfloor}^{\lceil \log (n/n_\epsilon) \rceil} e^{- e^k} \Pr \left[ e^k \leq \frac{n}{\Phin((1+\epsilon)\bar{F}(L_\b)^{-1})} \leq  e^{k+1} \right] + e^{-n/n_\epsilon} \\ 
   & \eqdef I_0 +  I_1 + I_2 + I_3. 
 \end{align}

First, we upper bound $I_1$ in \eqref{eq:1all}, which equals
 \begin{align*}
  I_1 & =   \frac{1}{F(\b)} \int_{0}^{\b} \left(1-\frac{1}{\Phin((1+\epsilon)\bar{F}(x)^{-1})}\right)^n  \textbf{1} \left( \frac{n}{\Phin((1+\epsilon)\bar{F}(x)^{-1})} <H \right)  dF(x) \\
    & =   \frac{1+\epsilon}{\Phi(n)F(\b)}  \int_{n/ \Phin((1+\epsilon)\bar{F}(\b)^{-1})}^{H} \left(1- \frac{z}{n}\right)^n{\frac{\Phi(n)}{\Phi(n/z)} \frac{\Phi'(n/z)}{\Phi(n/z)} \frac{n}{z^2}dz} \\
     & \leq   \frac{1+\epsilon}{\Phi(n) F(\b)}  \int_{n \bar{G}(\b)/ \eta_\epsilon}^{H} \left(1- \frac{z}{n}\right)^n{\frac{\Phi(n)}{\Phi(n/z)} \frac{\Phi'(n/z)}{\Phi(n/z)} \frac{n}{z^2}dz},
 \end{align*}
  where we use the change of variables $z= n/\Phin((1+\epsilon)F(x)^{-1})$ and the absolute continuity of $F(x)$. For the last inequality, observe that $1/ \Phin((1+\epsilon) \bar F (\b)^{-1}) \geq  \bar G(\b) / \eta_\epsilon$ from \eqref{eq:Phin}. Now, similarly as before, we consider two cases: 

If $n \bar G (\b) < \eta_\epsilon h_\epsilon \leq C$, where $h_\epsilon > 0$ is a small constant, $I_1$ is upper bounded by
 \begin{align}\label{eq:I1-a}
  I_1 &  \leq \frac{1+\epsilon}{F(\b) \Phi(n)} \int_{h_\epsilon}^{H} \left(1- \frac{z}{n}\right)^n{\frac{\Phi(n)}{\Phi(n/z)} \frac{\Phi'(n/z)}{\Phi(n/z)} \frac{n}{z^2}dz}   +  \Pr \left(  \frac{n}{ \Phin((1+\epsilon) \bar F (L_\b)^{-1})} < h_\epsilon \right) .
 \end{align}
 
Now, since $\Phi(.)$ is absolutely continuous and regularly varying, it follows that for all $h_\epsilon \leq z \leq H$, 
\begin{displaymath} 
\frac{\Phi(n)}{  \Phi(n/z)}   \leq (1+\epsilon)^{1/2} z^ \alpha ,
\end{displaymath}
for large $n$, and, by \eqref{eq:fstar}, 
\begin{displaymath}  
\frac{\Phi'(n/z)}{  \Phi(n/z)} = \frac{\alpha z }{ n},
\end{displaymath}
for $n > H $. 

Next, we compute the second term in \eqref{eq:I1-a} as
\begin{align*}
 \Pr \left(  \bar F (L_\b) < \frac{1+\epsilon} { \Phi(n/h_\epsilon)} \right) & \leq \int_0^{\infty} \textbf{1} \left(\bar F(x) < \frac{1+\epsilon}{\Phi(n/h_\epsilon )} \right) \frac{dF(x)}{F(\b)}= \frac{1}{F(\b)}\Pr \left[ \bar F(L) < \frac{1+\epsilon}{\Phi(n/h_\epsilon )}\right] \\
 & \leq \frac{1+\epsilon}{F(\b) \Phi(n/h_\epsilon )} \leq \frac{(1+\epsilon)^2 h_\epsilon^\alpha}{\Phi(n)},
\end{align*}
which follows from the uniform distribution of $\bar F(L)$ and using $\Phi(n) / \Phi(n/h_\epsilon)  \leq (1+\epsilon)^{1/2} h_\epsilon^\alpha$ for large $n$, along with $F(\b)^{-1}\leq (1+ \epsilon)^{1/2}$. Now, observe that the first term in \eqref{eq:I1-a} is upper bounded by
\begin{align*}
  \frac{\alpha (1+\epsilon)^2}{\Phi(n)}  \int_{h_\epsilon}^{H} \left(1- \frac{z}{n}\right)^n z^{\alpha-1} dz    \leq   \frac{\alpha (1+\epsilon)^2}{\Phi(n)}   \int_{n \bar{G}(\b)/ \eta_\epsilon}^{H}  \left(1- \frac{z}{n}\right)^n z^{\alpha-1} dz,
\end{align*}
since $ n \bar{G}(\b) < h_\epsilon \eta_\epsilon$. Also, for the integral we obtain 
\begin{align*}
 \int_{n \bar{G}(\b)/ \eta_\epsilon}^{H}  \left(1- \frac{z}{n}\right)^n z^{\alpha-1} dz & =  \int_{n \bar{G}(\b)}^{H}  \left(1- \frac{z}{n}\right)^n z^{\alpha-1} dz + \int_{n \bar{G}(\b)/ \eta_\epsilon}^{n \bar{G}(\b)}  \left(1- \frac{z}{n}\right)^n z^{\alpha-1} dz \\
 & \leq \int_{n \bar{G}(\b)}^{H}  \left(1- \frac{z}{n}\right)^n z^{\alpha-1} dz + \int_{n \bar{G}(\b)/ \eta_\epsilon}^{n \bar{G}(\b)}  z^{\alpha-1} dz \\
 & \leq \int_{n \bar{G}(\b)}^{H}  \left(1- \frac{z}{n}\right)^n z^{\alpha-1} dz + (n \bar{G}(\b))^ {\alpha} (1 -  \eta_\epsilon^{-\alpha} )/ \alpha \\
  & \leq \int_{n \bar{G}(\b)}^{H}  \left(1- \frac{z}{n}\right)^n z^{\alpha-1} dz + 5 C^ {\alpha}  \epsilon / \alpha,
\end{align*}
after observing that $1 -  \eta_\epsilon^{-\alpha} \rightarrow 4 \epsilon$ as $\epsilon \rightarrow 0$. Now, by changing the variables $ 1- z/n = e^{-u/n}$, we have
\begin{align}  \label{eq:I1upperb}
\nonumber I_1 &  \leq      \frac{\alpha (1+\epsilon)^2}{\Phi(n)}  \int_{n \bar G(\b)}^{H} \left(1- \frac{z}{n}\right)^n z^{\alpha-1} dz  + \frac{(1+\epsilon)^2 5 C^\alpha \epsilon}{\Phi(n)} + \frac{(1+\epsilon)^2 h_\epsilon^\alpha}{\Phi(n)}  \\
\nonumber & \leq   \frac{\alpha (1+\epsilon)^2}{\Phi(n)} \int_{- n \log(1 - \bar{G}(\b))}^{- n \log (1 - H/n)}  e^{-u} (1 - e^{-u/n})^{\alpha-1} n^{\alpha -1} e^{- u/n } du + \frac{(1+\epsilon)^2 (5 C^\alpha \epsilon + h_\epsilon^\alpha)}{\Phi(n)} \\
 \nonumber & \leq   \frac{\alpha (1+\epsilon)^2}{\Phi(n)} \int_{- n \log(1 - \bar{G}(\b))}^{\infty}  e^{-u} u^{\alpha-1} du + \frac{(1+\epsilon)^2}{\Phi(n)}  (5 C^\alpha \epsilon+h_\epsilon^\alpha)  \\
 & \leq   \frac{\alpha (1+\epsilon)^2}{\Phi(n)} \int_{- n \log(1 - \bar{G}(\b))}^{\infty}  e^{-u} u^{\alpha-1} du \left[  1 + \frac{5 C^\alpha \epsilon+ h_\epsilon^\alpha}{\alpha \Gamma (2C, \alpha)} \right] \\
\nonumber & \leq   \frac{\alpha (1+\epsilon)^2 (1+\sqrt{\epsilon})}{\Phi(n)} \int_{- n \log(1 - \bar{G}(\b))}^{\infty}  e^{-u} u^{\alpha-1} du,
\end{align}
where, in the second inequality, we use $e^{-u/n} \leq 1$, the inequality $1- e^{-x} \leq x$, $x \geq 0$ and extend the integral to infinity. Last, we pick $\epsilon$ small, such that $ 5 C^\alpha \epsilon+h_\epsilon^{\alpha} \leq  \sqrt {\epsilon} \alpha   \Gamma(2C, \alpha)$. 

Note that the preceding equation along with \eqref{eq:I1-lb} imply that $I_1$ is lower bounded as $I_1 \geq  (1-\epsilon) \alpha \Gamma(2n \bar G(\b), \alpha) / \Phi(n) \geq  ( \alpha /2) \Gamma(2C, \alpha) / \Phi(n)$, for all $n > n_0$ and $\epsilon < 1/2$, by the inequality $1-x \geq e^{-2x}$ for $x\geq 0$ small, since by assumption $n \bar G(\b) \leq C$, i.e., $\bar G(\b)$ is small. 

If $h_\epsilon \eta_\epsilon  \leq n \bar G (\b) \leq C$, we have 
\begin{align*}
I_1 &  \leq \frac{1+\epsilon}{F(\b) \Phi(n)} \int_{n\bar{G}(\b)/  \eta_\epsilon}^{H} \left(1-\frac{z}{n} \right)^n {\frac{\Phi(n)}{\Phi(n/z)} \frac{\Phi'(n/z)}{\Phi(n/z)} \frac{n}{z^2}dz},
\end{align*}
and, by the properties of regularly varying functions in the interval $n/H  \leq n/z \leq 1/ \bar{G}(\b) \leq n/h_\epsilon$, for $H > C$, and using the same arguments as in \eqref{eq:I1upperb}, we have
\begin{align*}
 I_1 & \leq   \frac{\alpha (1+\epsilon)^2}{\Phi(n)}  \int_{n \bar G (\b)/  \eta_\epsilon}^{H} \left(1-\frac{z}{n} \right)^n z^{\alpha-1} dz \\
 & \leq   \frac{\alpha (1+\epsilon)^2}{\Phi(n)}  \int_{- n \log(1 -   \bar{G}(\b))}^{\infty} e^{-z} z^{\alpha-1} dz \left[1 + \frac{5 C^ {\alpha}  \epsilon }{\alpha \Gamma(2C, \alpha) }  \right]  \\
 & \leq   \frac{\alpha  (1+\epsilon)^2 (1+\sqrt{\epsilon})}{\Phi(n)}  \int_{- n \log(1 - \bar{G}(\b))}^{\infty} e^{-z} z^{\alpha-1} dz.
 \end{align*}

Therefore, from both cases, it follows that for all $n > n_0$, 

\begin{equation} \label{eq:I1-ub}
I_1  \leq \frac{\alpha(1+\epsilon)}{\Phi(n)} \Gamma(-n \log(1- \bar{G}(\b)), \alpha),
\end{equation}
after replacing $(1 + \epsilon)^{2}(1+ \sqrt{\epsilon})$ with $1 + \epsilon$.

Next, we evaluate the second term in \eqref{eq:1all} as
\begin{align*}
 I_2 & = \sum_{k=\lfloor \log H \rfloor}^{\lceil \log(n/n_\epsilon) \rceil} e^{-
e^k} \Pr \left[ e^k \leq  \frac{n}{\Phin((1+\epsilon)\bar{F}(L_\b)^{-1})} \leq  e^{k+1} \right] \\
 & = \sum_{k=\lfloor \log H \rfloor}^{\lceil \log(n/n_\epsilon) \rceil} e^{-e^k} \Pr \left[ (1+\epsilon)  / \Phi \left(\frac{n}{ e^{k+1}} \right) \leq \bar{F}(L_\b) \leq (1+\epsilon) / \Phi \left( \frac{n}{ e^k} \right)  \right] 
 \end{align*}
 \begin{align*}
  & \leq \sum_{k=\lfloor \log H \rfloor}^{\lceil \log(n/n_\epsilon) \rceil} e^{-e^k} \int_0^{\infty} \textbf{1}  \left( \bar{F}(x) \leq \frac{1+\epsilon} {\Phi \left(n/ e^k \right)}  \right) \frac{dF(x)}{F(\b)} \\
 &\leq \sum_{k=\lfloor \log H \rfloor}^{\infty} e^{-e^k} \frac{1+\epsilon}{F(\b) \Phi \left(n/ e^{k} \right)}, 
 \end{align*}
which follows from the fact that the integral in the second inequality is equal to $\Pr[\bar F(L) \leq (1+\epsilon) / \Phi \left(n/ e^k \right) ]/ F(\b)$ and $\bar F(L) $ is uniform in $[0,1]$. Thus, 
\begin{align*}
  I_2 &\leq \frac{1+\epsilon}{F(\b)\Phi(n)}\sum_{k=\lfloor \log H \rfloor}^{\infty} e^{-e^k} \frac{\Phi(n)}{\Phi(n/e^{k})} \\
 &\leq \frac{ 1+\epsilon }{F(\b) \Phi(n)}\sum_{k=\lfloor \log H \rfloor}^{\infty} e^{-e^k}e^{ k (\alpha+1)} , 
\end{align*}
where we make use of \eqref{eq:drv1}. Since the preceding sum is finite, we obtain that for large $H$ and all $n > n_0$,
\begin{equation} \label{eq:I2}
I_2 \leq   \frac{\epsilon}{2} I_1.
\end{equation}

Last, we observe that, for fixed $x_0$, it follows that for $n > n_0$, 
 \begin{align} \label{eq:I3}
  I_0 + I_3 = e^{-n \bar G(x_0)} + e^{-n/n_\epsilon} \leq  \frac{\epsilon}{2} I_1.
 \end{align}

Finally, using \eqref{eq:I1-ub}-\eqref{eq:I3}, we obtain for \eqref{eq:1all} that for all $n > n_0$,
\begin{align*}
\Pr[N_\b > n ] \leq (1+\epsilon)^2 \frac{\alpha }{\Phi(n)} \Gamma(-n \log (1-\bar{G}(\b)), \alpha),
\end{align*}
which completes the proof after replacing $ (1 +\epsilon)  $ with $(1+\epsilon )^{1/2}$. 
\qed \end{proof}

The following corollary is an immediate consequence of Theorem~\ref{thm:3} and it represents a small generalization of Theorem 2.1 in \cite{JT2007}.

\begin{corollary}\label{cor:PL}
If $\Pr[L>x]^{-1} = \ell (\Pr[A>x]^{-1})\Pr[A>x]^{-\alpha}$, $x \geq 0, \alpha>0$, where $\ell (x)$ is slowly varying, then, as $n \rightarrow \infty$ and $n \Pr[A>\b] \rightarrow 0 $,
\begin{equation}\label{eq:PL}
\Pr[N_\b > n] \sim \frac{\Gamma(\alpha + 1)}{\ell(n) n^\alpha}.
\end{equation}
\end{corollary}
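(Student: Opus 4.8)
The plan is to derive Corollary~\ref{cor:PL} directly from Theorem~\ref{thm:3} by taking the appropriate limit in the uniform estimate \eqref{eq:thm21}. First I would fix an arbitrary constant $C>0$ and observe that the hypothesis $n\Pr[A>\b]\to 0$ certainly implies $n\Pr[A>\b]\le C$ eventually, so that for all large $n$ the bound
\begin{equation*}
\left| \frac{\Pr[N_{\b} >n]\, n^\alpha \ell(n) } { \alpha\, \Gamma(-n \log \Pr[A \leq \b], \alpha)} -1  \right| \leq \epsilon
\end{equation*}
is in force. The task then reduces to controlling the incomplete Gamma factor: I claim that $\Gamma(-n\log\Pr[A\le\b],\alpha)\to\Gamma(\alpha)$ as $n\Pr[A>\b]\to 0$.

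To establish this, I would use $-\log\Pr[A\le\b] = -\log(1-\bar G(\b)) \sim \bar G(\b)$ as $\b\to\infty$ (equivalently $\bar G(\b)\to 0$, which is forced since $n\bar G(\b)\to 0$ with $n\to\infty$), so the lower limit of integration $-n\log\Pr[A\le\b]$ is asymptotically $n\bar G(\b)$, hence $\to 0$. Since $\Gamma(x,\alpha)=\int_x^\infty e^{-z}z^{\alpha-1}dz$ is continuous in $x$ on $[0,\infty)$ and $\Gamma(0,\alpha)=\int_0^\infty e^{-z}z^{\alpha-1}dz=\Gamma(\alpha)$ by the definition of the Gamma function, we get $\Gamma(-n\log\Pr[A\le\b],\alpha)\to\Gamma(\alpha)$. (More explicitly, $\Gamma(\alpha)-\Gamma(x,\alpha)=\int_0^x e^{-z}z^{\alpha-1}dz\le \int_0^x z^{\alpha-1}dz = x^\alpha/\alpha\to 0$.)

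Combining, for any $\epsilon>0$ we have, for all $n$ large enough,
\begin{equation*}
(1-\epsilon)\,\frac{\alpha\,\Gamma(-n\log\Pr[A\le\b],\alpha)}{n^\alpha\ell(n)} \le \Pr[N_\b>n] \le (1+\epsilon)\,\frac{\alpha\,\Gamma(-n\log\Pr[A\le\b],\alpha)}{n^\alpha\ell(n)},
\end{equation*}
and since $\alpha\,\Gamma(-n\log\Pr[A\le\b],\alpha)\to\alpha\Gamma(\alpha)=\Gamma(\alpha+1)$, dividing through by $\Gamma(\alpha+1)/(\ell(n)n^\alpha)$ and letting $n\to\infty$ pins the ratio between $(1-\epsilon)$ and $(1+\epsilon)$; as $\epsilon$ was arbitrary, $\Pr[N_\b>n]\ell(n)n^\alpha/\Gamma(\alpha+1)\to 1$, which is \eqref{eq:PL}. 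There is essentially no obstacle here: the corollary is a routine specialization, and the only mild subtlety is making sure the regime $n\Pr[A>\b]\to 0$ is compatible with "$n>n_0$ and $n\Pr[A>\b]\le C$" uniformly — which it is, since one can first choose $n_0$ from Theorem~\ref{thm:3} for the fixed $C$ and then use the hypothesis to ensure $n\Pr[A>\b]\le C$ for $n$ beyond a possibly larger threshold.
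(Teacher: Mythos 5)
Your argument is correct and is exactly the route the paper intends: the paper states Corollary~\ref{cor:PL} as an immediate consequence of Theorem~\ref{thm:3}, and your write-up simply supplies the routine details (eventually $n\Pr[A>\b]\le C$, the lower limit $-n\log\Pr[A\le\b]\sim n\Pr[A>\b]\to 0$, hence $\Gamma(-n\log\Pr[A\le\b],\alpha)\to\Gamma(\alpha)$ and $\alpha\Gamma(\alpha)=\Gamma(\alpha+1)$).
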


Now, we characterize the remaining region where $ n \Pr[A> \b] > C$. Informally speaking, this is the region where $\Pr[N_\b > n]$ has a lighter tail converging to the exponential when $n >> \bar{G}(\b)^{-1}$. In the following theorem, we need more restrictive assumptions for $\ell(x)$; see the discussion before Theorem~\ref{thm:2}. In particular, we assume that $\ell(x)$ is slowly varying and eventually differentiable with $\ell'(x) x/ \ell(x) \rightarrow 0$ as $x \rightarrow \infty$. 

\begin{theorem}\label{thm:4}
Assume that $\Pr[L>x]^{-1} = \ell (\Pr[A>x]^{-1})\Pr[A>x]^{-\alpha}, \alpha >0$, $x \geq 0$, where $\ell(x)$ is  slowly varying and eventually differentiable with $\ell'(x) x/ \ell(x) \rightarrow 0$ as $x \rightarrow \infty$. Then, for any $\epsilon > 0$, there exist $\b_0, n_0$, such that for all $n > n_0, \b > \b_0, n\Pr[A>\b] >C$,
\begin{equation} \label{eq:exact}
\left| \frac{\Pr[N_{\b} >n]n^\alpha \ell(\Pr[A>\b]^{-1})  } { \alpha \Gamma(-n \log \Pr[A \leq \b], \alpha)} -1  \right| \leq \epsilon .
\end{equation}
\end{theorem}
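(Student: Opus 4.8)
The plan is to follow the same decomposition strategy as in the proof of Theorem~\ref{thm:3}, splitting $\Pr[N_\b>n] = \E[1-\bar G(L_\b)]^n$ into a contribution from $\{L_\b \le x_0\}$ and from $\{L_\b > x_0\}$, and on the second piece sandwich $\bar G(L_\b)$ between $1/\Phin((1\pm\epsilon)\bar F(L_\b)^{-1})$ using \eqref{eq:main1}, where now $\Phi(x)=\ell(x)x^\alpha$. The key difference from Theorem~\ref{thm:3} is that we are in the regime $n\bar G(\b) > C$, so the lower endpoint of all the resulting integrals is bounded below by $C$ rather than approaching $0$; this is exactly why the incomplete Gamma function $\Gamma(-n\log(1-\bar G(\b)),\alpha)$ is the right normalization, since it no longer converges to $\Gamma(\alpha+1)$. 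The slow-variation argument in $\ell$ must be handled more carefully here: in Theorem~\ref{thm:3} the argument of $\ell$ could be pinned at $\ell(n)$, but for $n\bar G(\b)>C$ the relevant scale is $\bar G(\b)^{-1}$, not $n$, which is why the statement \eqref{eq:exact} has $\ell(\Pr[A>\b]^{-1})$ in the denominator and why we need the extra hypothesis $\ell'(x)x/\ell(x)\to 0$.

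First I would do the \emph{lower bound}: restrict the expectation to $\{L_\b > x_n\}$ with $x_n$ chosen so that $\Phin((1-\epsilon)\bar F(x_n)^{-1}) = n/H$ for a large constant $H$, apply the change of variables $z = n/\Phin((1-\epsilon)\bar F(x)^{-1})$, and bound $(1-z/n)^n$ below by $e^{-(1+\delta)z}$ on the relevant range. Using that $\Phi$ is regularly varying, $\Phi(n)/\Phi(n/z) \gtrsim (1-\epsilon)z^\alpha$ — but now I would rewrite $\Phi(n) = \ell(n) n^\alpha$ and, crucially, relate $\ell(n)$ to $\ell(\bar G(\b)^{-1})$ using the fact that $n$ and $\bar G(\b)^{-1}$ are within a constant factor... no, they are not; instead $n \gtrsim C/\bar G(\b)$, so $n$ and $\bar G(\b)^{-1}$ can differ by an arbitrarily large factor. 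This is the delicate point: I would factor $\Phi'(n/z)/\Phi^2(n/z)\cdot n/z^2 = \Phi'(n/z)/\Phi(n/z) \cdot 1/\Phi(n/z)\cdot n/z^2$ and write $1/\Phi(n/z) = 1/(\ell(n/z)(n/z)^\alpha)$; then, since $z$ ranges over $[n\bar G(\b),H]$ and hence $n/z$ ranges over $[n/H, 1/\bar G(\b)]$, I would invoke the hypothesis $\ell'(x)x/\ell(x)\to 0$ (equivalently, $\ell$ varies sub-polynomially) together with the Potter-type bounds to conclude $\ell(n/z) \sim \ell(\bar G(\b)^{-1})$ uniformly over this range for $n,\b$ large, because $n/z$ and $\bar G(\b)^{-1}$ differ by at most a factor $H/C$, and a slowly varying function with vanishing log-derivative is asymptotically constant over any bounded multiplicative window. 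The remaining steps mirror Theorem~\ref{thm:3}: split the integral $\int_{n\bar G(\b)/\eta_\epsilon}^H$ into $\int_{n\bar G(\b)}^\infty$ minus correction terms $\int_H^\infty e^{-z}z^{\alpha-1}dz$ and $\int_{n\bar G(\b)}^{\eta_\epsilon n\bar G(\b)} z^{\alpha-1}dz$, bound the latter by $(n\bar G(\b))^\alpha(\eta_\epsilon^\alpha - 1)/\alpha$, and normalize by $\Gamma(C,\alpha) \le \Gamma(n\bar G(\b),\alpha)$ — except here, since $n\bar G(\b)>C$ can be \emph{large}, I need the correction $(n\bar G(\b))^\alpha(\eta_\epsilon^\alpha-1)$ to be small relative to $\Gamma(n\bar G(\b),\alpha) \sim (n\bar G(\b))^{\alpha-1}e^{-n\bar G(\b)}$, which it is, since $(\eta_\epsilon^\alpha - 1)\to 0$ and a fixed relative error suffices; choosing $\eta_\epsilon$ (i.e.\ $\epsilon$) small enough handles it. Finally replace $\bar G(\b)$ by $-\log(1-\bar G(\b))$ using $\bar G(\b) \le -\log(1-\bar G(\b)) \le \bar G(\b)(1+\epsilon)$ for $\b$ large and absorb the discrepancy, again using that $\Gamma(x,\alpha)$ is well-behaved under $x \mapsto x(1+\epsilon)$ only up to a \emph{multiplicative} constant $e^{-\epsilon x}$ — which is NOT negligible when $x$ is large! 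So at this last step I would not replace the argument but rather keep $\Gamma(-n\log(1-\bar G(\b)),\alpha)$ throughout and only use $\bar G(\b) \le -\log(1-\bar G(\b))$ in the direction that is safe, matching exactly what \eqref{eq:I1-lb} does.

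For the \emph{upper bound} I would follow \eqref{eq:0all}--\eqref{eq:I3}: write $\Pr[N_\b>n] \le e^{-n\bar G(x_0)} + \E(1-\Phin((1+\epsilon)\bar F(L_\b)^{-1})^{-1})^n\textbf{1}(L_\b>x_0)$, decompose the second term as $I_0 + I_1 + I_2 + I_3$ where $I_1$ is the contribution from $\{n/\Phin((1+\epsilon)\bar F(L_\b)^{-1}) < H\}$, $I_2$ is the dyadic sum over $e^k \le n/\Phin((1+\epsilon)\bar F(L_\b)^{-1}) \le e^{k+1}$, and $I_0, I_3$ are the boundary pieces. The estimate $I_2 \le (\epsilon/2) I_1$ and $I_0 + I_3 \le (\epsilon/2) I_1$ go through verbatim, provided we re-establish the crucial lower bound $I_1 \ge (\alpha/2)\Gamma(2C,\alpha)/\Phi(n)$ — but note that here $\Phi(n) = \ell(n)n^\alpha$ whereas the target normalization is $\ell(\bar G(\b)^{-1})n^\alpha$; since $n\bar G(\b) > C$ these differ, so I must be careful to normalize the whole chain by $\ell(\bar G(\b)^{-1})n^\alpha$ from the outset and track the ratio $\ell(n/z)/\ell(\bar G(\b)^{-1})$, which by the argument above is $1+o(1)$ uniformly for $z$ in the effective range. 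The rest — changing variables $1-z/n = e^{-u/n}$, bounding $(1-e^{-u/n})^{\alpha-1}n^{\alpha-1}e^{-u/n} \le u^{\alpha-1}$ via $1-e^{-x}\le x$, extending the integral to $\infty$, and absorbing the $O(\epsilon)$ correction terms into a factor $(1+\sqrt\epsilon)$ relative to $\Gamma(-n\log(1-\bar G(\b)),\alpha)\ge \Gamma(2C,\alpha)$ — is as in \eqref{eq:I1upperb}. Combining the lower and upper bounds and replacing $(1+\epsilon)^k$ by $1+\epsilon$ throughout yields \eqref{eq:exact}.

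The main obstacle is the uniform control of the slowly varying function $\ell$ on the multiplicative window determined by the range of $n/z$, which is genuinely different from the situation in Theorem~\ref{thm:3} where $\ell(n)$ sufficed; this is precisely what forces the hypothesis $\ell'(x)x/\ell(x)\to 0$, and the technical heart of the proof is showing that this hypothesis implies $\ell(\lambda_n x_n)/\ell(x_n) \to 1$ uniformly for $\lambda_n$ in a fixed bounded interval and $x_n \to \infty$ — stronger than plain slow variation, but exactly what vanishing log-derivative provides via $\log\ell(\lambda x) - \log\ell(x) = \int_x^{\lambda x}(\ell'(t)/\ell(t))\,dt = \int_x^{\lambda x} o(1/t)\,dt = o(\log\lambda) = o(1)$. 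A secondary subtlety, flagged above, is that in the geometric-tail regime one cannot cavalierly swap $\bar G(\b)$ for $-\log(1-\bar G(\b))$ inside $\Gamma(\cdot,\alpha)$ since for large argument the two differ by a non-negligible exponential factor; one must use the directed inequality $\bar G(\b) \le -\log(1-\bar G(\b))$ only where it helps and otherwise keep $-n\log(1-\bar G(\b))$ as the true argument throughout.
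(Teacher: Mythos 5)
There is a genuine gap, and it sits exactly at the point your outline leans hardest on Theorem~\ref{thm:3}'s machinery. The sandwich \eqref{eq:main1}--\eqref{eq:Phin} perturbs the effective value of $\bar G$ by the multiplicative factor $\eta_\epsilon$, i.e.\ it shifts the lower limit of the resulting integrals from $n\bar G(\b)$ to $\eta_\epsilon^{\pm1} n\bar G(\b)$. In the regime of Theorem~\ref{thm:4} this is fatal: for the incomplete Gamma normalization, a multiplicative shift of the (large) argument costs a factor of order $e^{(\eta_\epsilon-1)n\bar G(\b)}$, which is unbounded as $n\bar G(\b)\to\infty$ for any fixed $\epsilon$ --- this is precisely the sensitivity flagged in the discussion before Theorem~\ref{thm:2} ($e^{-(1+\epsilon)x}\not\approx e^{-x}$). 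Your proposed fix is quantitatively wrong: you bound the correction by $(n\bar G(\b))^\alpha(\eta_\epsilon^\alpha-1)/\alpha$ (having dropped the $e^{-z}$ factor) and claim it is small relative to $\Gamma(n\bar G(\b),\alpha)\sim (n\bar G(\b))^{\alpha-1}e^{-n\bar G(\b)}$; the ratio is $\asymp n\bar G(\b)\,e^{n\bar G(\b)}(\eta_\epsilon^\alpha-1)$, which diverges for every fixed $\epsilon$, so no choice of $\epsilon$ rescues it. A second structural problem: the truncation at a fixed $H$ defining $I_1$ (the set $\{n/\Phin(\cdot)<H\}$) captures essentially nothing once $n\bar G(\b)>H$, which is allowed here; the mass of $e^{-u}u^{\alpha-1}$ on $[n\bar G(\b),\infty)$ concentrates just above $n\bar G(\b)$, so the integration window must scale with $\bar G(\b)$ rather than be a fixed interval $[h,H]$. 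Relatedly, the bound $I_1\geq(\alpha/2)\Gamma(2C,\alpha)/\Phi(n)$ you want to ``re-establish'' is a constant-times-power-law bound and cannot hold when the true size of $I_1$ is exponentially small in $n\bar G(\b)$.

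The paper's proof avoids both issues by using the hypothesis $\ell'(x)x/\ell(x)\to0$ in a different way than you do: not to get uniformity of $\ell$ over a bounded multiplicative window (that is just the uniform convergence theorem and holds for any slowly varying $\ell$, so it cannot be the ``technical heart''), but to differentiate the relation $\bar F=\bar G^{\alpha}/\ell(1/\bar G)$ and obtain the density asymptotics $d\bar F(x)=(1+o(1))\,\alpha\,\bar G(x)^{\alpha-1}\ell(1/\bar G(x))^{-1}\,d\bar G(x)$. This keeps the exact factor $(1-\bar G(x))^n$ --- no $\eta_\epsilon$-perturbation of the exponent ever occurs --- and after the substitution $z=\bar G(x)$ the integral runs over $[\bar G(\b),H\bar G(\b)]$, a window proportional to $\bar G(\b)$, on which $\ell(1/z)\approx\ell(1/\bar G(\b))$. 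The remaining work, which your outline does not supply, is showing the truncation errors are small \emph{relative to an exponentially small main term}: the paper lower-bounds $I_1$ by $\bar G(\b)^{\alpha}e^{-4n\bar G(\b)}/(4n)$ and then splits into the cases $n\bar G(\b)\geq\log n$ and $C<n\bar G(\b)<\log n$, in the latter case extending the window dyadically to $He^{m}\bar G(\b)$ with $m\asymp\log\log n$ chosen so that $e^{-(He^m-4)n\bar G(\b)}n^{\alpha+1}\to0$. Without replacing the $\Phin$-sandwich by the density argument (or something equivalent that leaves the exponent unperturbed) and without this case analysis, the proposed proof does not go through in the regime $n\Pr[A>\b]\to\infty$, which is the whole point of Theorem~\ref{thm:4}.
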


\begin{remark}
Observe that Theorems~\ref{thm:3} and \ref{thm:4} cover the entire distribution $\Pr[N_\b > n ]$ for all large $n$ and $\b$. Interestingly, the formula for the approximation is the same except for the argument of the slowly varying part, which equals to $n$ and $\Pr[A>\b]^{-1}$, respectively. Furthermore, when $n \Pr[A>\b] = C$ the formulas are asymptotically identical as $\ell(n) = \ell(C\Pr[A>\b]^{-1}) \sim \ell(\Pr[A>\b]^{-1})$ as $n \rightarrow \infty$. 
\end{remark}

\begin{remark}
Note that most well known examples of slowly varying functions satisfy the condition $\ell'(x) x / \ell(x) \rightarrow 0$ as $x \rightarrow \infty$, including $\log^\beta x, \log^\beta(\log x), \beta>0,  \exp(\log x/ \log \log x), \exp(\log^\gamma x)$, for $0<\gamma <1$ [see Section 1.3.3 on p.16 in \cite{BG87}].
\end{remark}

\begin{proof}
Recall that
\begin{align} \label{eq:thm4-1}
\nonumber  \Pr[N_\b > n] & = \E [1-\bar{G}(L_\b)]^n \\
\nonumber & =   \int_{0}^\b{\left(1-\bar{G}(x) \right)^n \frac{dF(x)}{F(\b)}} \\
  & =   \int_{0}^{x_0}{\left(1-\bar{G}(x) \right)^n \frac{dF(x)}{F(\b)}}   + \int_{x_0}^\b{\left(1-\bar{G}(x) \right)^n \frac{dF(x)}{F(\b)}} .  
\end{align}
Now, given that $\ell(x)$ is eventually differentiable ($x \geq x_0$) and slowly varying with $\ell'(x) x/\ell(x) \rightarrow 0$ as $x \rightarrow \infty$, it follows that $  d\bar{F}(x) =(1+o(1))  \alpha \bar{G}(x)^{\alpha - 1}   \ell^{-1}(1/\bar{G}(x))  d\bar{G}(x) $ as $x \rightarrow \infty$. Thus, for any $0<\epsilon <1$, we can select $x_0$ large enough such that
\begin{align}\label{eq:thm4-2-UB}
\nonumber \Pr[N_\b > n]  & \leq  \left(1-\bar{G}(x_0) \right)^n -(1+\epsilon)^{1/2}\int_{x_0}^\b{(1-\bar{G}(x))^n  \frac{\alpha \bar{G}(x)^{\alpha - 1}   d\bar{G}(x) }{\ell(1/\bar{G}(x)) F(\b)}} \\
 &  =   \left(1-\bar{G}(x_0) \right)^n +  (1+\epsilon)^{1/2} \int_{\bar{G}(\b)}^{\bar{G}(x_0)}{(1-z)^n  \frac{\alpha z^{\alpha - 1}   dz }{\ell(1/z) F(\b)}}, 
\end{align}
which follows from the absolute continuity of $G(x)$, i.e., $\bar G(A)$ is uniformly distributed in [0,1]. 

Now, for $\alpha \geq 1$, we consider two different cases: (a) $ n \bar G (\b)  \geq  \log n $ and (b) $C < n \bar G (\b) <  \log n$.

\noindent Case (a):  $ n \bar G (\b)  \geq  \log n $. Observe that, for any fixed $H>\alpha + 6$, we can make $H \bar G (\b)$ small enough by picking $b_0$ large. Now, by continuity of $G(x)$, there exists $x_0$ such that $\bar{G}(x_0) = H  \bar{G}(\b)$; we can choose $x_0$ larger than in \eqref{eq:thm4-2-UB} by picking $b_0$ large enough. Next, using the elementary inequality $1-x \leq e^{-x}, x \geq 0$, we upper bound the preceding expression by
\begin{align} \label{eq:thm4-2-a1}
 \nonumber \Pr[N_\b > n]  & \leq  e^{-n\bar{G}(x_0)} +  \frac{\alpha(1+\epsilon)^{1/2}}{F(\b)}  \int_{\bar{G}(\b)}^{H \bar{G}(\b)}{(1-z)^n    \frac{ z^{\alpha - 1}   dz }{\ell(1/z) }} \\
 \nonumber & \leq  e^{-nH \bar{G}(\b)} +  \frac{\alpha(1+\epsilon)^{1/2}  }{\ell(1/ \bar{G}(\b)) F(\b)} \sup_{\bar G (\b) \leq z \leq H \bar G (\b)} \frac{\ell(1/\bar{G} (\b))}{\ell(1/z )}  \int_{\bar{G}(\b)}^{H \bar{G}(\b)}{(1-z)^n   z^{\alpha - 1} dz} \\
 \nonumber & \leq  e^{-nH \bar{G}(\b)} +  \frac{\alpha(1+\epsilon)}{\ell(1/ \bar{G}(\b)) F(\b)}  \int_{\bar{G}(\b)}^{H \bar{G}(\b)}{(1-z)^n   z^{\alpha - 1} dz} \\
 & \eqdef I_0 + I_1,
\end{align}
where, for the third inequality, by the uniform convergence theorem (see \cite{BG87}) of $\ell(x)$, $\bar G (\b)^{-1}$ can be chosen large enough such that $ \sup_{  (H\bar G(\b))^{-1} \leq y \leq  \bar G(\b)^{-1}} \ell( \bar G(\b)^{-1}) / \ell(y) \leq (1+\epsilon)^{1/2}$. 

Now, we derive a lower bound for $I_1$ in \eqref{eq:thm4-2-a1}. Using the monotonicity of $z^{\alpha-1}, \alpha \geq1$ and since $F(\b) \leq 1$, we obtain
\begin{align*}  
 I_1  &  \geq  \frac{1}{\ell(1/ \bar{G}(\b)) }  \int_{\bar{G}(\b)}^{H \bar{G}(\b)}{(1-z)^n   z^{\alpha - 1} dz} \\
 &  \geq  \frac{1}{\bar{G}(\b)^{-\epsilon} } \bar{G}(\b)^{\alpha - 1}  \int_{\bar{G}(\b)}^{H \bar{G}(\b)}{(1-z)^n   dz} \\
  & =   \frac{1}{n+1} \bar{G}(\b)^{\alpha - 1 + \epsilon}  (1-\bar G (\b))^{n+1}  \left(  1 -  \left(\frac {1-H \bar G (\b))}{1 - \bar G (\b)} \right)^{n+1} \right), 
 \intertext{where in the second inequality, we use the property of slowly varying functions $\ell(x) \leq x^\epsilon$ for $x$ large enough. Now, observe that for all $x \geq 0 $ small enough, $1-x \geq e^{-2 x}$, yielding}
 I_1 & \geq   \frac{1}{4 n} \bar{G}(\b)^{\alpha}  e^{-4 n \bar G (\b)}  ,
\end{align*}
where the last inequality follows from the fact that $n/ (n+1) \geq 1/2 $ for $n\geq 1$ and $\bar G(\b)^{\alpha - 1 + \epsilon} \geq \bar G (\b)^{\alpha}$ since $\epsilon <1$. We also note that
 \begin{displaymath}
 \left( \frac{ 1-H\bar{G(\b)}}{1 - \bar G (\b)} \right)^{n+1} \leq \left(e^{-H\bar G (\b)}/ e^{-2 \bar G (\b)} \right)^{n} = e^{-(H-2) n \bar{G} (\b)} \leq e^{-(H-2) C} \leq 1/2,
 \end{displaymath}
where we use our assumption $n\bar G (\b) > C$ and choose $H$ large enough. Finally, we obtain
\begin{equation} \label{eq:I1-lb-1}
 I_1  \geq   \frac{1}{4 n} \bar{G}(\b)^{\alpha}  e^{-4 n \bar G (\b)}.
 \end{equation}

Now, we proceed with proving that $I_0/ I_1$ in \eqref{eq:thm4-2-a1} is negligible as $n \rightarrow \infty$. To this end, observe that 
\begin{align*} 
  \frac{I_0}{ I_1} & \leq 4 \frac{ e^{- H n \bar{G}(\b)}n }{ \bar G(\b)^{\alpha } e^{-4 n \bar G(\b)} }  \leq  4  \frac{ e^{- (H-4) n \bar G(\b)}n^{\alpha+1} }{ (n \bar G (\b))^{\alpha  }  } \\
    & \leq 4 e^{-(H - 4)n \bar G(\b)}n^{\alpha+1}     \leq  4 e^{-(\alpha + 2) \log n }n^{\alpha+1}, 
 \end{align*}
 where we use our assumption that $n \bar G (\b) \geq \log n > 1$ for $n > 2$, whereas for the last inequality, we also use the fact that $H  > \alpha + 6$. Thus, the preceding expression is upper bounded by
 \begin{equation} \label{eq:I0-1-a}
  \frac{I_0}{ I_1}   \leq \frac{4} {n} \leq \epsilon,    
 \end{equation}
for all $n \geq 4/ \epsilon$.

Now, we upper bound $I_1$ in \eqref{eq:thm4-2-a1} by changing the variables $z= 1-e^{-u/n}$,
\begin{align} \label{eq:I1-2}
\nonumber I_1 & =  \frac{\alpha(1+\epsilon)}{F(\b) \ell(1/\bar{G}(\b))}   \int_{-n \log \left(1-\bar{G}(\b)\right)}^{-n \log(1-H \bar G (\b))}{ \frac{e^{-u(n+1)/n}(1-e^{-u/n})^{\alpha-1}}{n} du}  \\
\nonumber & \leq   \frac{\alpha(1+\epsilon)}{F(\b) \ell(1/\bar{G}(\b))} \int_{-n\log \left(1-\bar{G}(\b)\right)}^{\infty}{  \frac{e^{-u}(1-e^{-u/n})^{\alpha-1}}{n} du},
\intertext{where for the inequality we use $e^{-u/n} \leq 1$ and extend the integral to infinity. Thus, for $\alpha \geq 1$, from the preceding expression using the inequality $1-e^{-x} \leq x,$ for $ x\geq 0$, we obtain}
\nonumber I_1 & \leq \frac{\alpha(1+\epsilon)}{F(\b) n \ell(1/\bar{G}(\b))}  \int_{-n \log \left(1-\bar{G}(\b)\right)}^{\infty} { e^{-u} \left(\frac{u}{n}\right)^{\alpha-1} du} \\
\nonumber & \leq   \frac{\alpha(1+\epsilon)}{F(\b) n^\alpha \ell(1/\bar{G}(\b))}   \int_{-n \log \left(1-\bar{G}(\b)\right)}^{\infty} { e^{-u} u^{\alpha-1} du}  \\
& =  \frac{\alpha(1+\epsilon)}{F(\b) n^\alpha \ell(1/\bar{G}(\b))}   \Gamma (-n \log \left(1-\bar{G}(\b)\right), \alpha) .
\end{align}

Combining \eqref{eq:I0-1-a} and \eqref{eq:I1-2}, we obtain for \eqref{eq:thm4-2-a1} that for all $n \geq n_0,b \geq b_0$,
\begin{align*} 
  \Pr[N_\b > n] & \leq  \frac{\alpha(1+\epsilon)^2}{F(\b) n^\alpha \ell(1/\bar{G}(\b))}   \Gamma (-n \log \left(1-\bar{G}(\b)\right), \alpha),
  \end{align*}
which completes the proof after replacing $(1+\epsilon)$ with $(1+\epsilon)^{1/2}$.


\noindent Case (b):  $ C <  n \bar G (\b)  <  \log n $. In this region, for any fixed $H>5$, we choose the smallest $m \geq 1$ such that $H e^m - 4 \geq (\alpha + 2) \log n / C$, i.e., $m =  \lceil \log \left((\alpha +2) \log n /C + 4 \right) - \log H  \rceil$. Furthermore, it is important to note that this choice of $m$ allows for $H e^m \bar{G}(\b) $ to be small enough, since $H e^m \bar{G}(\b)  \leq H e^m \log n / n = O(\log^2 n/n) \rightarrow 0$, as $n \rightarrow \infty$, by the assumption that $n \bar G (\b) < \log n$. Then, by continuity of $G(x)$, there exists $x_0$ such that $\bar{G}(x_0) = H e^m \bar{G}(\b)$ (larger than $x_0$ in \eqref{eq:thm4-2-UB} for $b_0$ large enough) and using the elementary inequality $1-x \leq e^{-x}, x \geq 0$, we upper bound the expression in \eqref{eq:thm4-2-UB} by
\begin{align*}\label{eq:thm4-2}
& \Pr[N_\b > n] \\
&   \leq  e^{-n\bar{G}(x_0)} +  \frac{\alpha(1+\epsilon)^{1/2}}{F(\b)} \left[ \int_{\bar{G}(\b)}^{H \bar{G}(\b)}{(1-z)^n    \frac{ z^{\alpha - 1}   dz }{\ell(1/z) }} +  \sum_{k=0}^{m-1} \int_{He^{k} \bar{G}(\b)}^{He^{k+1} \bar{G}(\b)}{e^{-nz}    \frac{ z^{\alpha - 1}   dz }{\ell(1/z)}} \right] \\
  &  \leq  e^{-n\bar{G}(x_0)} + \frac{\alpha(1+\epsilon)^{1/2}}{F(\b)} \left[ \frac{(1+\epsilon)^{1/2}}{\ell(1/  \bar G (\b))} \int_{\bar{G}(\b)}^{H \bar{G}(\b)}{(1-z)^n    z^{\alpha - 1}   dz  } \right.  \\ 
  & \left.  \hspace{45mm} +    \sum_{k=0}^{m-1} \frac{(1+\epsilon)^{1/2}}{\ell(1/ (e^{k}\bar G (\b)))} { (He^{k+1} \bar{G}(\b))^{\alpha - 1}   \int_{He^{k} \bar{G}(\b)}^{He^{k+1} \bar{G}(\b)}{e^{-nz}   } } \right], 
\end{align*}
where the last inequality follows from the monotonicity of $z^{\alpha - 1}$ for $\alpha \geq 1$ and the uniform convergence theorem of $\ell(x)$, $\sup_{ (H \bar G (\b))^{-1} \leq z \leq 1/\bar G (\b))} \ell(1/ \bar G (\b)) / \ell(z) \leq (1+\epsilon)^{1/2}$, while for the second term, note that $ \sup_{ (H e^{k+1} \bar G (\b))^{-1} \leq z \leq (H e^k\bar G (\b))^{-1}}  \ell(1/ (e^k\bar G (\b)) / \ell(z) \leq (1+\epsilon)^{1/2}$, $k = 0 \dots (m-1)$, since $H e^m \bar G (\b)$ is small enough. Now, since $\ell(x)/ \ell(x/e) \leq e$ for $x$ large, it follows that
\begin{align}
\nonumber \Pr[N_\b > n]    &  \leq  e^{-n H e^m \bar{G}(\b)} + \frac{\alpha(1+\epsilon)}{F(\b)\ell(1/\bar{G}(\b))} \int_{\bar{G}(\b)}^{H \bar{G}(\b)}{(1-z)^n    z^{\alpha - 1}   dz  }  \\
\nonumber  &   \hspace{25mm} + \frac{\alpha(1+\epsilon)}{F(\b)n \ell(1/\bar{G}(\b))}   \sum_{k=0}^{m-1} e^{k} {e^{-n He^k  \bar{G}(\b)}  (He^{k+1} \bar{G}(\b))^{\alpha - 1}    } \\
& \eqdef I_0 + I_1 + I_2.
\end{align}
Now, we derive a lower bound for $I_1$ following similar arguments as in \eqref{eq:I1-lb-1}. Note that, for $x \geq 0 $ small enough, $1-x \geq e^{-2x}$, and thus, for $H$ large enough, we have
\begin{align}\label{eq:I1-lb-2}
\nonumber \frac{\ell(1/\bar{G}(\b)) F(\b)}{\alpha(1+\epsilon)}  I_1  & \geq   \bar G(\b)^{\alpha -1 } \int_{\bar{G}(\b)}^{H \bar{G}(\b)}{(1-z)^n       dz  } \\
\nonumber & \geq   \bar G(\b)^{\alpha -1 }  \frac{(1-\bar G(\b))^{n+1} - (1-H\bar{G(\b)})^{n+1}}{n+1}  \\
 \nonumber & =   \bar G(\b)^{\alpha -1 } \frac{(1-\bar G(\b))^{n+1}}{n+1}  \left( 1 - \left( \frac{ 1-H\bar{G(\b)}}{1 - \bar G (\b)} \right)^{n+1} \right)\\
& \geq \frac{ \bar G(\b)^{\alpha -1 }}{ 4 n} e^{- 4 n \bar G(\b)},
\end{align}
 where the expression in brackets is bounded from below by $1/2$ as in \eqref{eq:I1-lb-1}. 
 
 Now, we prove that $I_0/ I_1$ in \eqref{eq:thm4-2} is negligible as $n \rightarrow \infty$. To this end, observe that 
\begin{align}\label{eq:I0-2} 
 \nonumber  \frac{I_0}{ I_1}  & \leq \frac{ 4 F(\b) }{\alpha(1+\epsilon)} \frac{ e^{- He^m n \bar{G}(\b)} \ell(1/\bar{G}(\b))n }{ \bar G(\b)^{\alpha -1 } e^{- 4 n \bar G(\b)} },\\
  \intertext{where we use \eqref{eq:I1-lb-2}. Next, since $ \alpha \geq 1, F(\b) \leq 1$, and using the standard property of slowly varying functions that $\ell(x) \leq x$ for large $x$ (see Theorem~1.5.6 on page 25 of \cite{BG87}), we obtain}
\nonumber  \frac{I_0}{ I_1} &  \leq 4  \frac{e^{- (He^m-4) n \bar{G}(\b)}n }{  \bar G(\b)^{\alpha  }  }, \\
 \intertext{and since $n \bar G (\b) > C$, we have }
\nonumber  \frac{I_0}{ I_1} & \leq  4  \frac{ e^{- (He^m-4)n \bar G (\b) }n^{\alpha+1} }{ (n \bar G (\b) )^{\alpha }  } \\
   \nonumber & \leq  4 C^{-\alpha } e^{-(He^m - 4)C}n^{\alpha+1} \\
  \nonumber    & \leq  4 C^{-\alpha } e^{-(\alpha + 2 ) \log n }n^{\alpha+1}, 
      \intertext{where the last inequality follows from the fact that $m$ was chosen so that $(H e^m - 4) \geq  (\alpha + 2 )\log n/ C $. Thus, the preceding expression can be  rewritten as}
      \frac{I_0}{ I_1}  & \leq   \frac{4} {C^{\alpha } n} \leq \epsilon/2,    
  \end{align}
  for all $n \geq 8 C^{-\alpha }/ \epsilon$.
 
 Next, for the ratio $I_2/I_1$ we proceed similarly as before
 \begin{align} \label{eq:I2-2} 
 \nonumber  \frac{I_2}{I_1} & =  \frac{4  \sum_{k=0}^{m-1} e^{k} {e^{-n He^k  \bar{G}(\b)}  (He^{k+1} \bar{G}(\b))^{\alpha - 1}    } } {  \bar G(\b)^{\alpha-1 } e^{-4 n \bar G(\b)} } \\
\nonumber &   \leq  4  \sum_{k=0}^{m-1} e^{k} {e^{- (He^k-4)  n \bar{G}(\b)}  (He^{k+1})^{\alpha - 1} }, \\
\nonumber &   \leq  4 H^{\alpha - 1}  \sum_{k=0}^{m-1} e^{k} {e^{-(He^k-4)  C}  e^{\alpha(k+1) - k - 1} } \\
&   \leq 4  H^{\alpha-1} e^{-HC} \sum_{k=0}^{\infty} {e^{-5(e^k-1)  C + \alpha (k+1)-1} } \leq \epsilon/2,
  \end{align}
 where for the last inequality we use $H>5$. Now, we further observe that the preceding sum is finite and  thus, letting $H \rightarrow \infty$, the above ratio converges to 0, i.e., $I_2 \leq \epsilon I_1/2$ for large $H$.


Hence, since the upper bound for $I_1$ from \eqref{eq:I1-2} holds in this case as well, by putting \eqref{eq:I0-2} and \eqref{eq:I2-2} together, we obtain for \eqref{eq:thm4-2} that for all $n \geq n_0, b \geq b_0$,
\begin{align*} 
  \Pr[N_\b > n] & \leq  \frac{\alpha(1+\epsilon)^2}{F(\b) n^\alpha \ell(1/\bar{G}(\b))}   \Gamma (-n \log \left(1-\bar{G}(\b)\right), \alpha), 
  \end{align*}
which completes the proof after replacing $(1+\epsilon)$ with $(1+\epsilon)^{1/2}$.

Last, we prove the lower bound for $n \bar G(\b) >C$; here, we do not need to distinguish two cases. Thus, starting from \eqref{eq:thm4-1} and proceeding with similar arguments as in the proof for the upper bound, we obtain
\begin{align*}
 \Pr[N_\b > n]&  \geq -(1-\epsilon)^{1/2} \int_{x_0}^\b{(1-\bar{G}(x))^n  \frac{\alpha \bar{G}(x)^{\alpha - 1}   d\bar{G}(x) }{\ell(1/\bar{G}(x)) F(\b)}} \\
 &  =   (1- \epsilon)^{1/2} \int_{\bar{G}(\b)}^{H \bar{G}(\b)}{(1-z)^n  \frac{\alpha z^{\alpha - 1}   dz }{\ell(1/z) F(\b)}} \\
 & \geq   \frac{\alpha(1-\epsilon)}{F(\b)  \ell(1/\bar{G}(\b))}    \int_{-(n+1) \log \left(1-\bar{G}(\b)\right)}^{-(n+1)  \log{(1-H \bar G(\b))}}{ \frac{e^{-u}(1-e^{-\frac{u}{n+1}})^{\alpha-1}}{n+1} du},
 \intertext{where we use the uniform convergence theorem of slowly varying functions (Theorem~1.2.1 on page 6 of \cite{BG87}) and pick $x_0 < \b$ such that $\bar G (x_0) = H \bar G (\b)$. Next, using the inequality $1-e^{-x} \geq (1-\delta) x$, for some $\delta >0$ and all $x \geq 0$ small enough, we have}
  \Pr[N_b > n -1]  & \geq \frac{\alpha(1-\epsilon) (1-\delta)^{\alpha-1}}{F(\b) \ell(1/\bar{G}(\b)) n}   \int_{-n \log (1-\bar{G}(\b))}^{-n  \log (1-H \bar{G}(\b))} { e^{-u} \left(\frac{u}{n}\right)^{\alpha-1} du} \\
&  \geq \frac{\alpha(1-\epsilon)^2}{F(\b) n^\alpha \ell(1/\bar{G}(\b))}   \int_{-n \log \left(1-\bar{G}(\b)\right)}^{H n  \bar{G}(\b)} { e^{-u} u^{\alpha-1} du}\\
& =  \frac{\alpha(1-\epsilon)^2}{F(\b) n^\alpha \ell(1/\bar{G}(\b))}  \left[ \int_{-n \log \left(1-\bar{G}(\b)\right)}^{\infty} { e^{-u} u^{\alpha-1} du} -   \int_{Hn  \bar{G}(\b)}^{\infty} { e^{-u} u^{\alpha-1} du} \right] \\
& \eqdef I_1 - I_2,
\end{align*}
where, in the second inequality, we choose $\delta >0$ small enough such that $(1-\delta)^{\alpha - 1} \geq (1- \epsilon)$ and note that $ - n \log (1- H\bar G (\b)) \geq  H n \bar G (\b)$. 
Next, we proceed with showing that $I_2/ I_1$ is negligible for large $n$. Note that $ -n \log (1- \bar G (\b)) \leq 2 n \bar G(\b)$, which follows from the elementary inequality $e^{-2x} \leq 1-x $ for all $x \geq 0$ small enough. Thus, 
\begin{align*}
\frac{I_2}{I_1} & \leq  \frac{ \int_{ H n \bar{G}(\b)}^{\infty} { e^{-u} u^{\alpha-1} du}}{ \int_{2n \bar{G}(\b)}^{\infty} { e^{-u} u^{\alpha-1} du}} \\
& \leq  \frac{ 2 (H n  \bar{G}(\b))^{\alpha - 1} e^{-H n \bar G (\b)}}{(2n\bar G(\b))^{\alpha - 1} e^{-2n \bar G (\b)}}, \\
\intertext{where we use the approximation for the incomplete gamma function for large $H$ [see Remark~\ref{re:2} of Theorem~\ref{thm:3}]. Now, using the main assumption $n \bar G (\b) > C$, we obtain }
\frac{I_2}{I_1}  & \leq   2 H^{\alpha - 1} e^{-(H-2) C}  \leq \epsilon,
\end{align*}
for $H$ large enough. 
Then, using the preceding observation, we obtain 
\begin{align*}
  \Pr[N_\b > n] & \geq  \frac{\alpha (1-3\epsilon ) }{n^\alpha\ell(1/\bar{G}(\b))} \Gamma (-n \log \left(1-\bar{G}(\b)\right), \alpha),
\end{align*}
which completes the proof after replacing $\epsilon$ with $\epsilon/3$. 

Now, if $\alpha  <1$, the proof uses almost identical arguments coupled with the fact that $u^{\alpha -1}$ is a decreasing function. We omit the details to avoid unnecessary repetitions. 
\qed \end{proof}

\begin{remark}\label{rem:end}
From the preceding two theorems we observe that $ \Pr[N_\b > n]$ behaves as a true power law of index $\alpha$ when $- n \log  \Pr[A \leq \b]  \rightarrow c$, $0 \leq c<\infty$, and has an exponential tail (geometric) when $n \Pr[A>b]  \rightarrow \infty$ ($n >> \Pr[A>b]^{-1}$). More specifically:\\
(i) If $-n \log \Pr[A \leq \b] \rightarrow c$, then by Theorem~\ref{thm:3}, as $n   \rightarrow \infty, n \Pr[A>b]  \rightarrow c$,
\begin{align*}
\Pr[N_\b > n] & \sim \frac{\alpha}{\ell(n) n^{\alpha}} \Gamma (c, \alpha).
\end{align*}
(ii) If $n \Pr[A>b]  \rightarrow \infty$, then $-n\log \Pr[A \leq \b] \rightarrow \infty$ and thus, as $n   \rightarrow \infty, \b  \rightarrow \infty, n \Pr[A>b]  \rightarrow \infty$,
\begin{align*}
 \Pr[N_\b > n] &  \sim  \frac{\alpha}{\ell(1/\bar{G}(\b))n} \bar{G}(\b)^{\alpha-1}  (1-\bar{G}(\b))^n, 
 \end{align*}
 which follows from Theorem~\ref{thm:4} and the asymptotic expansion of the Gamma function (see Remark \ref{re:2} of Theorem \ref{thm:3}).
\end{remark}

Interestingly, one can compute the distribution of $\Pr[N_{\b} > n] $ exactly for the special case when the parameter $\alpha$ takes integer values and $\ell(x) \equiv 1$.
\begin{proposition} \label{prop:integer}
If $\Pr[L > x] = \Pr[A>x]^\alpha$, for all $x \geq 0$ and $\alpha$ is a positive integer, then
\begin{displaymath}
\Pr[N_{\b} > n] = \frac{1}{\Pr[L \leq \b]}\sum_{i = 1}^{\alpha} \frac{\alpha! \; n! \; \Pr[A>\b]^{\alpha - i}}{(\alpha - i)!(n+i)!} \Pr[A\leq \b]^{n+i}.
\end{displaymath}
\end{proposition}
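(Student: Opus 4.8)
The plan is to reduce $\Pr[N_\b>n]$ to an elementary (incomplete Beta) integral and then evaluate that integral in closed form, exploiting that $\alpha\in\mathbb{N}$ makes the integrand a polynomial times a power. First I would recall the conditional identity already used in the proof of Proposition~\ref{prop:UB}, namely $\Pr[N_\b>n\mid L_\b]=(1-\bar G(L_\b))^n$, so that $\Pr[N_\b>n]=\E\,(1-\bar G(L_\b))^n$. Under the hypothesis $\bar F(x)=\bar G(x)^\alpha$, definition \eqref{eq:defF} gives $\Pr[L_\b\le x]=(1-\bar G(x)^\alpha)/(1-\bar G(\b)^\alpha)$ for $0\le x\le\b$, with $\Pr[L\le\b]=1-\bar G(\b)^\alpha$; since $A$ is continuous and nonnegative, $\bar G(0)=1$. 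Differentiating and substituting $v=\bar G(x)$ (equivalently, using that $\bar G(A)$ is uniform on $[0,1]$) I would obtain
\begin{equation*}
\Pr[N_\b>n]=\frac{\alpha}{1-\bar G(\b)^\alpha}\int_{\bar G(\b)}^{1}(1-v)^n v^{\alpha-1}\,dv=\frac{\alpha}{1-\bar G(\b)^\alpha}\int_{0}^{1-\bar G(\b)}u^n(1-u)^{\alpha-1}\,du,
\end{equation*}
where the last equality is the substitution $u=1-v$.

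The core of the argument is then to evaluate $J(t)\eqdef\int_0^t u^n(1-u)^{\alpha-1}\,du$ at $t=1-\bar G(\b)=\Pr[A\le\b]$. Because $\alpha$ is a positive integer, $(1-u)^{\alpha-1}$ is a polynomial and $J(t)$ is elementary; I would compute it by $\alpha-1$ successive integrations by parts, each moving one factor of $(1-u)$ onto $u$, giving the recursion
\begin{equation*}
\int_0^t u^{m}(1-u)^{k}\,du=\frac{t^{m+1}(1-t)^{k}}{m+1}+\frac{k}{m+1}\int_0^t u^{m+1}(1-u)^{k-1}\,du,
\end{equation*}
which terminates at $\int_0^t u^{n+\alpha-1}\,du=t^{n+\alpha}/(n+\alpha)$ and unwinds to
\begin{equation*}
J(t)=\sum_{i=0}^{\alpha-1}\frac{(\alpha-1)!\;n!}{(\alpha-1-i)!\;(n+1+i)!}\,t^{\,n+1+i}(1-t)^{\alpha-1-i}.
\end{equation*}
(An equivalent route is induction on $\alpha$, or recognizing this as the finite expansion of the regularized incomplete Beta function.) Substituting $t=\Pr[A\le\b]$, $1-t=\Pr[A>\b]$, using $\alpha\cdot(\alpha-1)!=\alpha!$ and $1-\bar G(\b)^\alpha=\Pr[L\le\b]$, and finally re-indexing $i\mapsto i-1$ so that the sum runs over $i=1,\dots,\alpha$, yields exactly the stated formula.

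The step I expect to be the main obstacle — though it is routine rather than deep — is the bookkeeping in unwinding the recursion: one must verify that the accumulated coefficients $\prod k/(m+1)$ collapse to the factorial ratio $(\alpha-1)!\,n!/[(\alpha-1-i)!\,(n+1+i)!]$ and that the boundary terms are correctly indexed by $i$. Everything else follows directly from \eqref{eq:defF}, the representation $\Pr[N_\b>n]=\E(1-\bar G(L_\b))^n$, the continuity of $A$ (which gives both $\bar G(0)=1$ and the uniform-law change of variables), and elementary calculus.
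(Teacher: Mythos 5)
Your proposal is correct and follows essentially the same route as the paper, whose proof is exactly to start from the representation $\Pr[N_\b>n]=\int_0^\b(1-\bar G(x))^n\,dF(x)/F(\b)$ (equation \eqref{eq:thm4-1}) and apply repeated integration by parts; your change of variables to the incomplete Beta integral and the unwound recursion (whose coefficients and re-indexing check out) simply make the paper's one-line argument explicit.
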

\begin{proof}
It follows directly from \eqref{eq:thm4-1} using integration by parts.\qed \end{proof}

Finally, in the following proposition, we describe the tail of $\Pr[N_\b > n]$ for fixed and possibly small $\b$. This complements the conclusion of Remark~\ref{rem:end}(ii), however, we need $\ell(x) \equiv 1$. 

\begin{proposition} \label{prop:exptail}
Let $\b $ be fixed. If $\Pr[L>x] =  \Pr[A>x]^{\alpha}$, $\alpha >0$, $x \geq 0$, then
\begin{displaymath}
\Pr[N_{\b} > n] \sim \frac{\alpha \Pr[A>\b]^{\alpha - 1}} {\Pr[L \leq \b]} \frac{ \Pr[A \leq \b]^{n+1}} {n+1} \quad \text{as } n \rightarrow \infty.
\end{displaymath}
\end{proposition}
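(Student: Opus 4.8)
The plan is to compute $\Pr[N_\b > n]$ explicitly when $\bar F(x) = \bar G(x)^\alpha$ (so $\ell \equiv 1$), starting from the exact representation established in \eqref{eq:thm4-1},
\[
\Pr[N_\b > n] = \int_0^\b (1-\bar G(x))^n \frac{dF(x)}{F(\b)},
\]
and then change variables via the uniform random variable $\bar G(A)$. Since $\bar F(x) = \bar G(x)^\alpha$, differentiation gives $dF(x) = \alpha \bar G(x)^{\alpha-1}\,d\bar G(x)$ (as densities, up to sign), so with $z = \bar G(x)$ ranging over $[\bar G(\b), 1)$,
\[
\Pr[N_\b > n] = \frac{\alpha}{F(\b)} \int_{\bar G(\b)}^{1} (1-z)^n z^{\alpha-1}\,dz
 = \frac{\alpha}{\Pr[L\le \b]} \int_{0}^{\bar G(\b)} (1-w)^{\,? }\dots
\]
— more cleanly, substituting $w = 1-z$ turns this into $\frac{\alpha}{F(\b)}\int_0^{\bar G(\b)} w^n (1-w)^{\alpha-1}\,dw$ is not quite right either; the precise bookkeeping is that the integral is $\int_{\bar G(\b)}^1 (1-z)^n z^{\alpha-1}\,dz$, an incomplete Beta-type integral with a fixed lower limit $\bar G(\b)>0$. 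This exact formula is in fact the $\ell\equiv 1$ instance of Proposition~\ref{prop:integer} when $\alpha$ is an integer, but here $\alpha>0$ is arbitrary and $\b$ is \emph{fixed}, so I keep the integral form.

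Next I would extract the $n\to\infty$ asymptotics of $J_n \eqdef \int_{\bar G(\b)}^1 (1-z)^n z^{\alpha-1}\,dz$. Because $\bar G(\b)$ is a fixed constant strictly between $0$ and $1$, the factor $(1-z)^n$ concentrates the mass near the lower endpoint $z = \bar G(\b)$, where $1-z = 1 - \bar G(\b) = \Pr[A\le\b] < 1$. A Watson's-lemma / Laplace-type argument at the endpoint gives the leading term: write $z = \bar G(\b) + t$ and expand $z^{\alpha-1} = \bar G(\b)^{\alpha-1}(1+O(t))$ and $(1-z)^n = (1-\bar G(\b))^n e^{-n\log(1 + t/(1-\bar G(\b)))} \approx \Pr[A\le\b]^n e^{-nt/(1-\bar G(\b))}$, so
\[
J_n \sim \bar G(\b)^{\alpha-1} \Pr[A\le\b]^n \int_0^\infty e^{-nt/\Pr[A\le\b]}\,dt = \bar G(\b)^{\alpha-1} \frac{\Pr[A\le\b]^{n+1}}{n}.
\]
Multiplying by $\alpha/\Pr[L\le\b]$ yields $\Pr[N_\b > n] \sim \frac{\alpha \bar G(\b)^{\alpha-1}}{\Pr[L\le\b]}\cdot\frac{\Pr[A\le\b]^{n+1}}{n}$, which matches the claim up to replacing $n$ by $n+1$ in the denominator; since $n/(n+1)\to 1$, either normalization is valid asymptotically, and I would state it with $n+1$ to align with the integration-by-parts identity. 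Alternatively — and this is probably the cleanest write-up — one integrates $J_n$ by parts once in $z$ (differentiating $z^{\alpha-1}$, integrating $(1-z)^n$), which produces exactly the boundary term $\bar G(\b)^{\alpha-1}(1-\bar G(\b))^{n+1}/(n+1)$ plus a remainder integral $\frac{\alpha-1}{n+1}\int_{\bar G(\b)}^1 (1-z)^{n+1} z^{\alpha-2}\,dz$, and a routine bound shows the remainder is $o$ of the boundary term (its integrand has an extra $(1-z)$ factor bounded by $1-\bar G(\b)<1$ uniformly, so it is $O(J_{n+1}/(n+1)) = o(J_n)$).

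The main obstacle is the endpoint behavior when $\alpha < 1$: then $z^{\alpha-1}\to\infty$ as $z\to 0$, but the relevant endpoint is $z = \bar G(\b) > 0$, where $z^{\alpha-1}$ is perfectly finite and continuous, so there is in fact no singularity issue — the argument goes through for all $\alpha>0$ without case distinction, unlike in the proof of Theorem~\ref{thm:4}. The only genuine care needed is the uniformity of the Laplace expansion (or, in the integration-by-parts route, the uniform bound on the remainder integral), which is elementary because everything is controlled by the single constant $1-\bar G(\b) = \Pr[A\le\b] \in (0,1)$ and the continuity of $z^{\alpha-1}$ on $[\bar G(\b),1]$. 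I would therefore present the integration-by-parts version as the proof, as it is shortest and sidesteps any $\delta$-neighborhood estimates.
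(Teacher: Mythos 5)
Your proposal is correct: reducing \eqref{eq:thm4-1} via $z=\bar G(x)$ to $\frac{\alpha}{F(\b)}\int_{\bar G(\b)}^{\bar G(0)}(1-z)^n z^{\alpha-1}dz$ and extracting the endpoint contribution by one integration by parts (with the remainder bounded by $\frac{|\alpha-1|}{(n+1)\bar G(\b)}J_{n+1}=o(J_n)$, valid for every $\alpha>0$ since $z\ge\bar G(\b)>0$) is exactly the computation behind this result; the paper only cites \cite{JS2012} for it, but this is the same direct route used for the companion Proposition~\ref{prop:integer}, of which your argument is the non-integer analogue. The only cosmetic points are that the upper limit of the $z$-integral is $\bar G(0)$ (which may be $\le 1$) rather than $1$ — irrelevant for the asymptotics since the mass concentrates at $z=\bar G(\b)$ — and that the Laplace-type sketch can simply be dropped in favor of the integration-by-parts version you already identify as the clean write-up.
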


\begin{proof}
See Section 4 in \cite{JS2012}. 
\qed 
\end{proof}

\section{Simulation Experiments} \label{s:sim}
In this section, we illustrate the validity of our theoretical results with simulation experiments. In all of the experiments, we observed that our uniform exact asymptotics is literally indistinguishable from the simulation. In the following examples, we present the simulation experiments resulting from $10^8$ (or more) independent samples of $N_{\b,i}, 1 \leq i\leq 10^8$. This number of samples was needed to ensure at least 100 independent occurrences in the lightest end of the tail that is presented in the figures ($N_{\b,i} \geq n_{\max}$), thus providing a good confidence interval.

\begin{figure}[h]
\begin{minipage}[b]{0.5\linewidth}
\centering
\epsfig{file=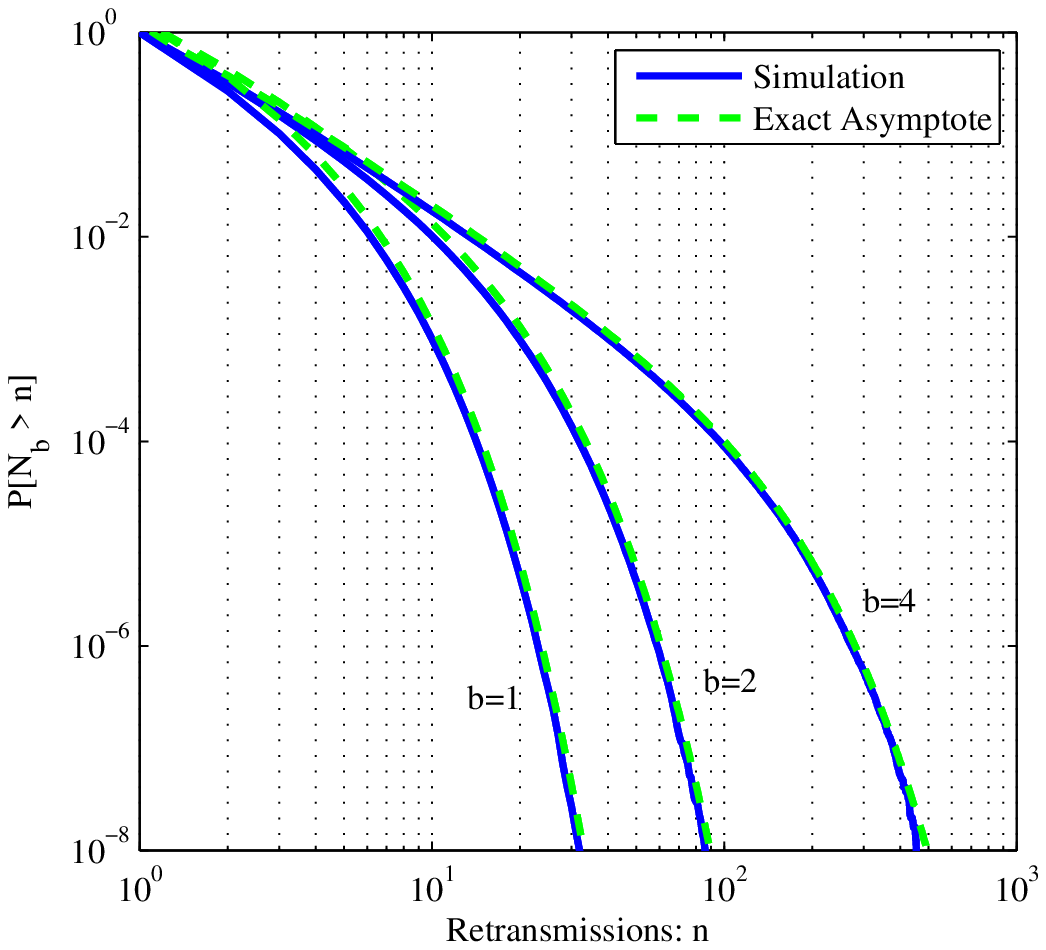, width = 2.8in, height = 2.4in}
\caption{\it Example 1(a). Exact asymptotics for $\alpha >1$.}
\label{fig:1}
\end{minipage}
\hspace{0.1cm}
\begin{minipage}[b]{0.5\linewidth}
\centering
\epsfig{file=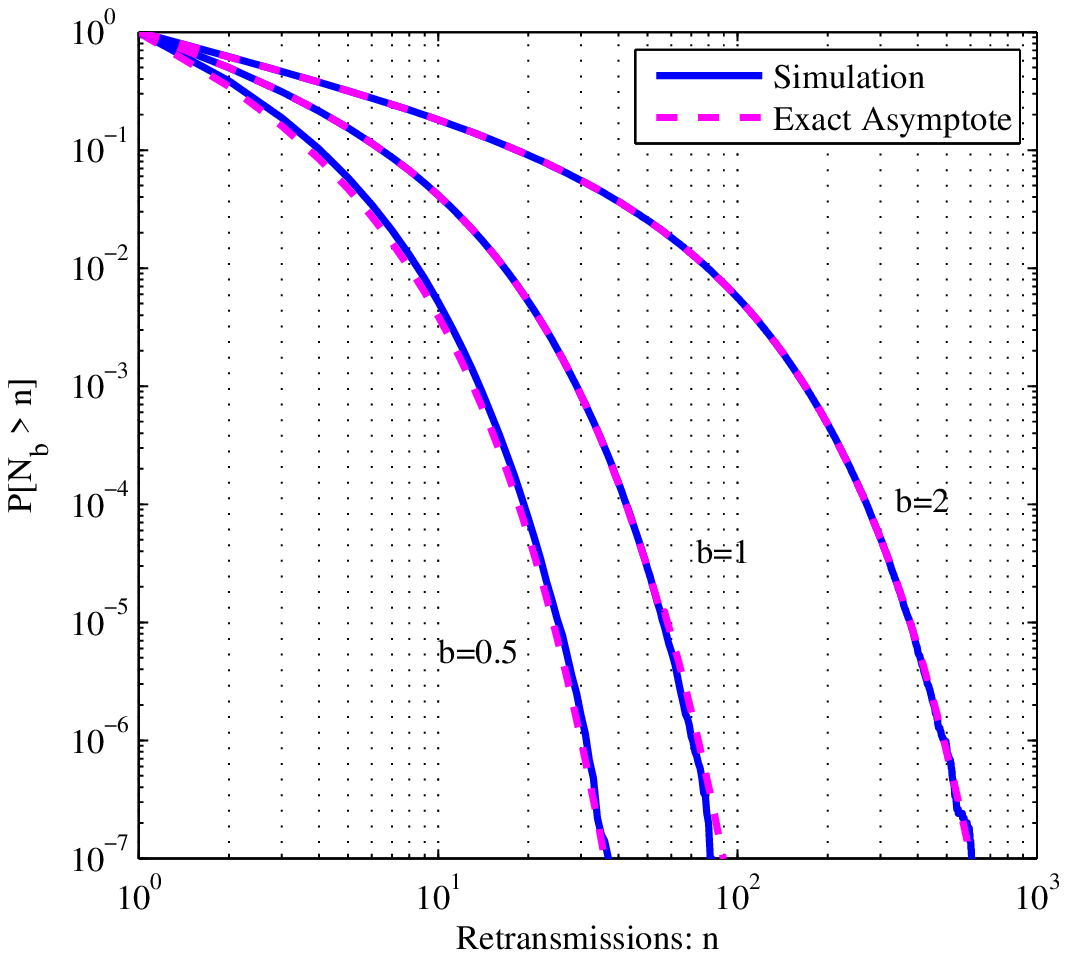, width = 2.8in, height = 2.35in}
\caption{\it Example 1(b). Exact asymptotics for $\alpha <1$.}
\label{fig:2}
\end{minipage}
\end{figure}

\textbf{Example 1. }This example illustrates the uniform exact asymptotics presented in Theorems~\ref{thm:3} and~\ref{thm:4}, i.e., approximation \eqref{eq:approx}, which combines the results from both theorems. We assume that $L$ and $A$ follow exponential distributions with parameters $\lambda=2$ and $\mu=1$, respectively. It is thus clear that $\bar{F}(x) = e^{-2x} = \bar{G}(x) ^{\alpha}$, where $\alpha =  2$ and $\ell(x) \equiv 1$. Now, approximation \eqref{eq:approx} states that $\Pr[N_\b > n]$ is given by $(1-e^{-2\b})^{-1} 2 n^{-2} \Gamma(ne^{-\b},2)$. Note that we added a factor $\Pr[L \leq \b]^{-1} = (1-e^{-2\b})^{-1} $, as in Propositions~\ref{prop:integer} and~\ref{prop:exptail}, for increased precision when $\b$ is small; we add such a factor to approximation~\eqref{eq:approx} in other examples as well. We simulate different scenarios when the data sizes $L_\b$ are upper bounded by $\b$ equal to 1, 2 and 4. The simulation results are plotted on log-log scale in Fig. \ref{fig:1}.

From Fig. \ref{fig:1}, we observe that the numerical asymptote approximates the simulation exactly for all different scenarios, even for very small values of $n$ (large probabilities). We further validate our approximation by considering scenarios where $L,A$ are exponentially distributed but $\alpha <1$; in fact, this case tends to induce longer delays due to larger average data size compared to the channel availability periods. In this case, we obtain $\alpha = 0.5$ by assuming $\lambda = 1$ and $\mu = 2$. Again, the simulation results and the asymptotic formulas are basically indistinguishable for all $n$, as illustrated in Fig.~\ref{fig:2}.

For both cases, we deduce that for $\b$ small the power law asymptotics covers a smaller region of the distribution of $N_\b$ and, as $n$ increases, the exponential tail becomes more evident and eventually dominates. As $\b$ becomes large - recall that $\b \rightarrow \infty$ corresponds to the untruncated case where the power law phenomenon arises -  the exponential tail becomes less distinguishable.

\begin{figure}[h]
\begin{minipage}[b]{0.5\linewidth}
\centering
\epsfig{file=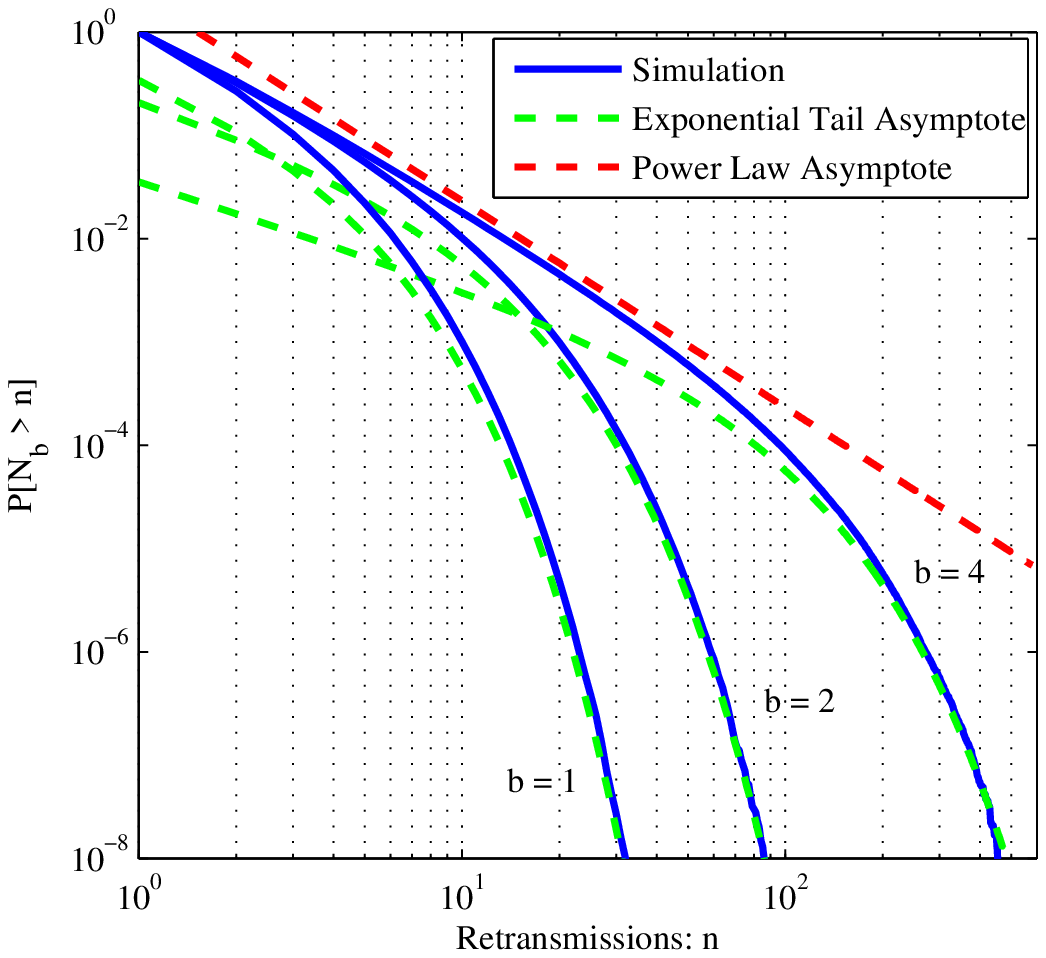, width = 2.8in, height = 2.4in}
\caption{\it Example 2. Power law versus exponential tail asymptotics.}\label{fig:3}
\end{minipage}
\hspace{0.2cm}
\begin{minipage}[b]{0.5\linewidth}
\centering
\epsfig{file=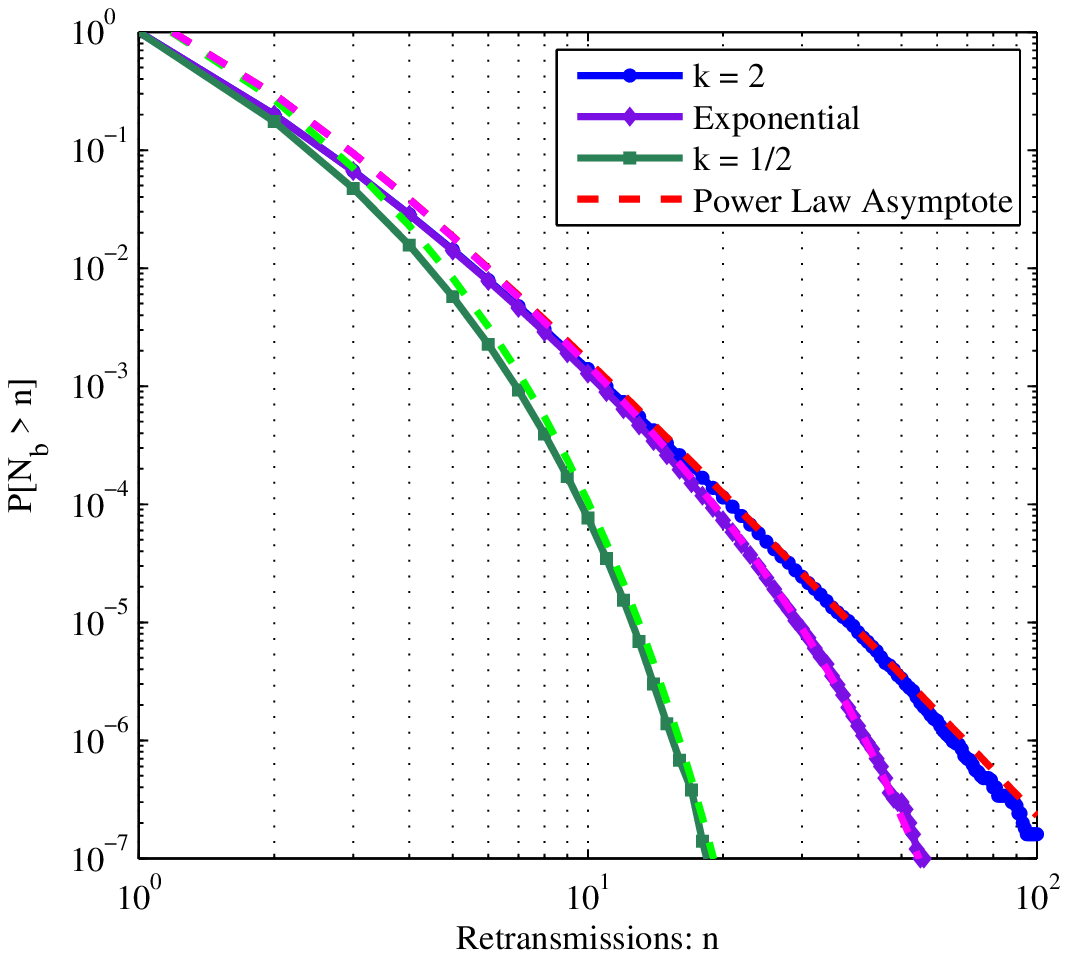 , width = 2.8in, height = 2.4in}
\caption{\it Example 3. Power law region increases for lighter tails of $L,A$.}
\label{fig:4}
\end{minipage}
\end{figure}


\textbf{Example 2. }This example demonstrates the exact asymptotics for the exponential tail as $n \rightarrow \infty$ and $\b$ is fixed, as in Proposition \ref{prop:exptail}. Note that this proposition gives the exact asymptotic formula for the region $n  \gg 1 / \bar{G}(\b)$ and lends merit to our Theorems~\ref{thm:2} and~\ref{thm:4}. Informally, we could say that a point $n_\b$ such that $-n_\b \log (1-\bar{G}(\b)) \approx n_\b \bar{G}(\b) = \alpha \log n_\b$ represents the transition from power law to the exponential tail. We assume that $L,A$ are exponentially distributed with $\lambda = 2$ and $\mu = 1$ (as in the first case of Example 1). Roughly speaking, we can see from Fig. \ref{fig:3} that the exponential asymptote appears to fit well starting from $n_\b \approx \alpha e^{\b}$, i.e., $n_\b \approx 6, 15, 100$ for $\b = 1, 2,4$, respectively.

\textbf{Example 3. }This example highlights the impact of the distribution type of channel availability periods $\bar{G}(x)=\Pr[ A > x]$. We consider some fixed $b$, namely $\b=8$, and assume that the matching between data sizes and channel availability, as defined in Theorems~\ref{thm:3} and~\ref{thm:4}, is determined by the parameter $\alpha=4$. We assume Weibull\footnotemark[1] distributions for $L,A$ with the same index $k$ and $\mu_L, \mu_A$ respectively, such that $\alpha = (\mu_A/\mu_L)^k$. The simulations include three different cases for the aforementioned distributions: Weibull with index $k=1 $ (exponential) where $\mu_L=1$ and $\mu_A =4$, Weibull (normal-like) with index $k=2 $ ($\mu_L= 1, \mu_A = 2$) and Weibull with $k = 1/2$ ($\mu_L = 1 , \mu_A =16$). Fig. \ref{fig:4} illustrates the exact asymptotics from equation \eqref{eq:approx}, shown with the lighter dashed lines; the main power law asymptote appears in the main body of all three distributions. We observe that heavier distributions (Weibull with $k=1/2$) correspond to smaller regions for the power law main body of the distribution $\Pr[N_\b >n ]$. On the other hand, the case with the lighter Gaussian like distributions for $k=2$ follows almost entirely the power law asymptotics in the region presented in Fig. \ref{fig:4}. This increase in the power law region can be inferred from our theorems, which show that the transition from the power law main body to the exponential tail occurs roughly at $n_\b \approx \bar{G}(\b)^{-1}$. Hence, the lighter the tail of the distribution of $A$, the larger the size of the power law region.

\footnotetext[1]{In general, a Weibull distribution with index $k$ has a complementary cumulative distribution function $\Pr[X>x] = e^{-(x/\mu)^k}$, where $\mu$ is the parameter that determines the mean. }

\begin{figure}[h]
\begin{minipage}[b]{0.5\linewidth}
\centering
 \epsfig{file=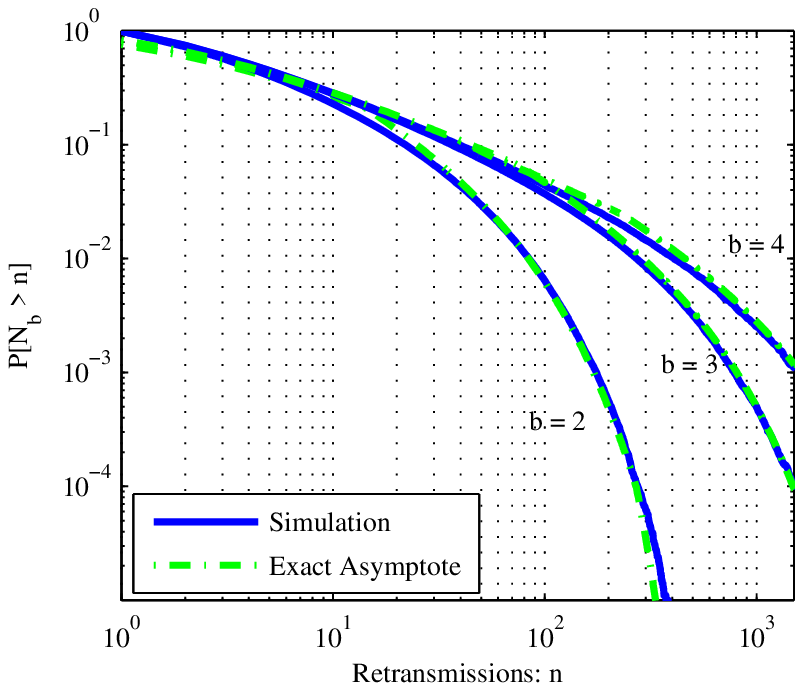 , width = 2.8in, height = 2.4in}
\caption{\it Example 4(a). Exact asymptotics for the case where $L$ is Gamma distributed.}
\label{fig:6}
\end{minipage}
\hspace{0.2cm}
\begin{minipage}[b]{0.5\linewidth}
\centering
\epsfig{file=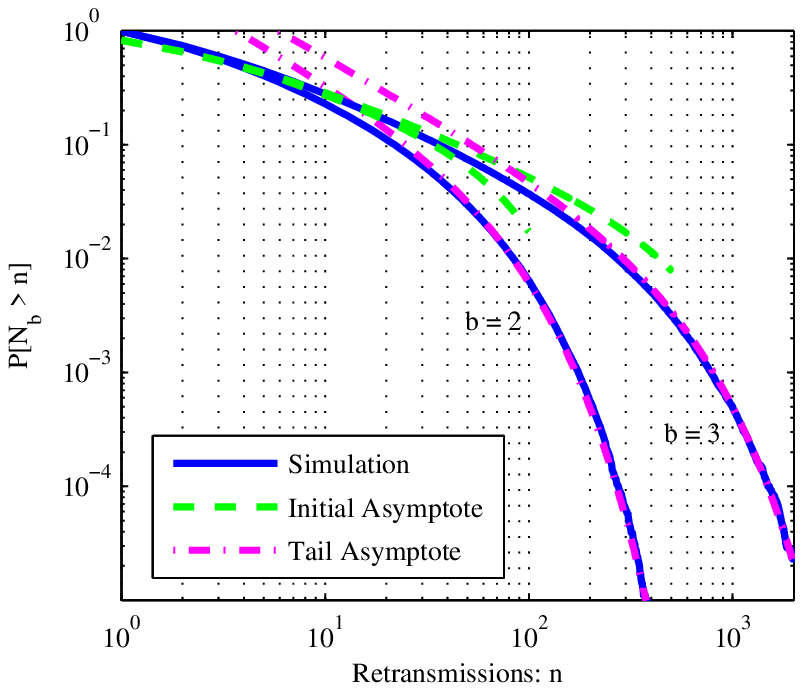 , width = 2.8in, height = 2.4in}
\caption{\it Example 4(b). The asymptotes from Theorems~\ref{thm:3} and~\ref{thm:4} for Gamma distributed $L$.}
\label{fig:7}
\end{minipage}
\end{figure}

\textbf{Example 4. }In this last example, we study the case where there is a more general functional relationship between the distributions of availability periods $A$ and data sizes $L$, as Theorems~\ref{thm:3} and \ref{thm:4} assume. In particular, we consider the case $ \bar{F}(x) =  \bar{G}(x)^\alpha / \ell(\bar{G}(x)^{-1}) $, where $\ell(x)$ is slowly varying. We validate the approximation \eqref{eq:approx} in this more general setting.

In particular, the availability periods $A$ are exponentially distributed with parameter $\mu$ while the data sizes $L$ follow the Gamma distribution with parameters ($\lambda, k$); the tail of the Gamma distribution function is defined as $ \lambda^{k} \Gamma(k)^{-1}   \int_x^{\infty} e^{-\lambda x} x^{k-1} dx = \Gamma(\lambda x, k) / \Gamma(k)$ and, therefore, the tail distribution of $L$ can be approximated by $\bar{F}(x) \sim (\lambda^{k-1}/ \Gamma(k))$ $ x^{k-1} e^{-\lambda x}$ for large $x$. 

We can easily verify that $\bar{F}(x) = f(\mu^{-1}\log{\bar{G}(x)^{-1}})  \bar{G}(x)^\alpha$, where $\alpha = \lambda / \mu$ and $$f(x) = \lambda^{k-1} \Gamma(k)^{-1} \int_0^{\infty} e^{-z} (z/\lambda + x)^{k-1} dz.$$ Hence, the slowly varying function in Theorems~\ref{thm:3} and~\ref{thm:4} is $\ell(x) =1/ f(\mu^{-1} \log x)$. From the preceding integral representation for $f(x)$, it can be easily shown that $\ell(x) \approx \Gamma(k)  \alpha^{1-k} \log^{1-k} x$, which is indeed slowly varying, and $$\bar{F}(x) \approx ( \alpha^{k-1} / \Gamma(k))  \log(\bar{G}(x)^{-1})^{k-1}  \bar{G}(x)^\alpha.$$ We take $\lambda = 2, k = 2$ and $\mu = 2$ and run simulations for $\b = \{2, 3, 4\}$. In Fig. \ref{fig:6}, we demonstrate the results using the approximation \eqref{eq:approx}. Interestingly, our analytic approximation works nicely even for small values of $n$ and $\b$ although the conditions in our theorems require $n$ and $\b$ to be large.

In Fig.~\ref{fig:7}, we elaborate on the preceding example. To this end, we plot two asymptotes: (i) the `Initial Asymptote' corresponding to the power law asymptote provided by Theorem~\ref{thm:3} and (ii) the `Tail Asymptote' from Theorem~\ref{thm:4}. Combining the two, we derive the approximation  \eqref{eq:approx}, as we have already shown in Fig.~\ref{fig:6}. Hereby, we see from Fig.~\ref{fig:7} that both asymptotes are needed to approximate the entire distribution well, i.e., the `Initial Asymptote' fits well the first part of the distribution, whereas the `Tail Asymptote' is inaccurate in the beginning but works well for the tail. Recall that these two asymptotes differ only in the argument of the slowly varying function $\ell(\cdot)$, which is equal to $n$ for the `Initial Asymptote' and $\bar{G}(\b)^{-1}$ for the tail.

\section{Proofs} \label{sec:4}
In this section, we present the proofs of Theorems~\ref{thm:1} and \ref{thm:2}.

\begin{proof}[of Theorem~\ref{thm:1}]
Without loss of generality, we assume $\epsilon <1$. We first observe that for $n \geq n_0$ the assumption $\Pr[A>\b] \leq 1 /n^{1+\epsilon}$ implies that $\b \geq \b_0$ for some $\b_0$; $\b_0$ can be made arbitrarily large by increasing $n_0$. Hence, the upper bound follows from Proposition \ref{prop:UB}. 

For the lower bound, we have
\begin{align*}
\Pr[N_\b > n]&= \E [1-\bar{G}(L_\b)]^n \\
& \geq  \E \left [(1-\bar{G}(L_\b))^n \textbf{1}(L_\b \geq x_0) \right] \\
& \geq (1 - \bar G(x_0))^n \Pr ( \b \geq L \geq x_0 ), 
\intertext{since $\Pr(L \leq \b) \leq 1$. Now, by our assumption $\bar G(\b) \leq  1/n^{1+\epsilon}$ and the continuity of $G(x)$, there exists $x_0 <  \b$ such that $\bar G(x_0) = 2 /n$. Hence, }
\Pr[N_\b > n] & \geq \left(1 - \frac{2 }{n} \right)^n  (\bar F(x_0)- \bar F(\b) ) \\
& \geq e^{-4 } \left( \bar G(x_0)^{\alpha (1+\epsilon/3)} -  \bar G (\b)^{\alpha (1-\epsilon/3)} \right), \\ 
\intertext{where we use our main assumption $ \bar G (x)^{\alpha(1+\epsilon/3)} \leq \bar F(x) \leq \bar G (x)^{\alpha(1-\epsilon/3)}$ and the inequality $1-x \geq e^{-2x}$ for small $x$. Now, by the assumption that $\bar G(\b) \leq 1/n^{1+\epsilon}$, we have}
\Pr[N_\b > n] & \geq e^{-4 } \left[ \left(\frac{2}{n} \right)^{\alpha (1+\epsilon/3)} - \left(\frac{1}{n^{1+\epsilon}} \right) ^{\alpha (1-\epsilon/3)} \right]. \\
\intertext{Next, observe that $(1+\epsilon) (1- \epsilon/3) > (1+\epsilon/3)$, since $\epsilon <1 $, and thus, }
 \Pr[N_\b > n]  & \geq  e^{-4 } \left( \frac{2^\alpha}{n^{\alpha (1+\epsilon/3)}}  - \frac{1}{n^{\alpha (1-\epsilon/3)}} \right) \\
&  \geq  e^{-4 }  \frac{ 2^ {\alpha}  - 1 } {n^ {\alpha(1+\epsilon/3)} } =    \frac{h_\epsilon } {n^ {\alpha(1+\epsilon/3)} },
\end{align*}
where we set $ h_\epsilon =   e^{-4 } ( 2^ {\alpha}  - 1)  $ and by taking the logarithm, we obtain 
 \begin{align*}
 \log \Pr[N_\b > n]& \geq  \log h_\epsilon   - \alpha(1+ \epsilon/3) \log n.
\end{align*}

Finally, since $\log n$ is increasing in $n$, we can choose $n_0$ such that for all $n \geq n_0$, $2 \alpha \epsilon \log n \geq  -3 \log h_\epsilon$, i.e.,
 \begin{align*}
 \log \Pr[N_\b > n]&\geq - \alpha(1+\epsilon) \log n. 
\end{align*}\qed
 \end{proof}

Last, we prove Theorem~\ref{thm:2}.

\begin{proof}[of Theorem~\ref{thm:2}]
First, we prove the result in the region $n \bar G(\b) \geq  \delta \log n$, for any fixed $\delta >0$, where the distribution of $\Pr[N_\b >n]$ approaches a geometric tail. 

Note that, for all $\b \geq \b_0$, the upper bound follows from Proposition~\ref{prop:UB}. For the lower bound, observe that 
\begin{align*}
\Pr[N_\b > n] &=  \E [1-\bar{G}(L_\b)]^n \\
& =   \int_{0}^\b{\left(1- \frac{ \bar{F}(x)^{\frac{1}{\alpha}}}{  \ell^{1/\alpha}(\bar{G}(x)^{-1}) } \right)^n \frac{dF(x)}{F(\b)}} \\
& \geq   \int_{x_0}^\b \left(1- \frac{ \bar{F}(x)^{\frac{1}{\alpha}}}{  \ell^{1/\alpha}(\bar{G}(x)^{-1}) } \right)^n dF(x)  , \\
\intertext{by $F(\b) \leq 1$. Then, by the continuity of $G(x)$, we can choose $x_0$ such that $\bar G(x_0 ) = \lambda_\epsilon \bar G(\b)$ where $\lambda_\epsilon = \lambda/(1-\epsilon)$, and $\lambda > 2$ is such that $\bar G(x_0) \leq 1$. Now, recall the uniform convergence theorem of $\ell(x)$ (Theorem~1.2.1 on page 6 of \cite{BG87}) in the region $\{\bar{G}(\b)^{-1} , \lambda_\epsilon^{-1} \bar{G}(\b)^{-1}  \}$ and thus, for $\b_0$ large enough, $\b \geq \b_0$, }
\Pr[N_\b > n] &  \geq \int_{x_0}^\b \left(1- (1-\epsilon)\frac{ \bar{F}(x)^{\frac{1}{\alpha}}}{  \ell^{1/\alpha}(\bar{G}(\b)^{-1}) } \right)^n dF(x)   \\
& = \E \left[ \left(1- (1-\epsilon)\frac{ \bar{F}(L)^{\frac{1}{\alpha}}}{  \ell^{1/\alpha}(\bar{G}(\b)^{-1}) } \right)^n \textbf{1}\left( \bar F(\b)\leq  \bar F(L) \leq \bar F(x_0) \right) \right]\\
& \geq  \frac{\alpha \ell(\bar{G}(\b)^{-1})} {(1-\epsilon)^\alpha} \int_{ (1- \epsilon)  \bar F(\b)^{1/\alpha}/ \ell^{1/\alpha}(\bar{G}(\b)^{-1})} ^{\ (1- \epsilon)  \bar F(x_0)^{1/\alpha}/ \ell^{1/\alpha}(\bar{G}(\b)^{-1})} \left(1- z\right)^n z^{\alpha -1} dz, 
\intertext{which follows from $\bar F(L) = U$, where $U$ is uniformly distributed in $[0,1]$, and the change of variables $z = (1- \epsilon) U^{1/\alpha}/ \ell^{1/\alpha}(\bar{G}(\b)^{-1}) $. Thus, using our main assumption that $\bar F(\b)^{1/\alpha}=\ell^{1/\alpha}(\bar G(\b)^{-1})\bar G(\b) $, and by $\bar F(x_0)^{1/\alpha} = \ell^{1/\alpha}(\bar G(x_0)^{-1})\bar G(x_0) =  \ell^{1/\alpha}(\lambda_\epsilon^{-1} \bar G(\b)^{-1}) \lambda_\epsilon \bar G(\b)  \geq (1-\epsilon) \lambda_\epsilon \bar G(\b) \ell^{1/\alpha}(\bar G(\b)^{-1}) = \lambda \bar G(\b)  \ell^{1/\alpha}(\bar G(\b)^{-1}) $, we obtain }
\Pr[N_\b > n] & \geq  \frac{\alpha \ell(\bar{G}(\b)^{-1})} {(1-\epsilon)^\alpha} \int_{ (1- \epsilon)  \bar G(\b)} ^{ (1- \epsilon) \lambda \bar G(\b)} \left(1- z\right)^n z^{\alpha -1} dz. 
\end{align*}

Now, if $\alpha \geq 1$, $z^{\alpha-1}$ is monotonically increasing and thus
\begin{align*}
 \Pr[N_\b > n] & \geq \frac{\alpha  \ell(\bar{G}(\b)^{-1})  (1-\epsilon)^{\alpha-1}\bar{G}(\b)^{\alpha-1}}{(1-\epsilon)^\alpha } \int_{ (1-\epsilon) \bar{G}(\b)}^{\lambda (1-\epsilon) \bar{G}(\b)}  (1-z)^n  dz \\
 & \geq\alpha  \bar{G}(\b)^{\alpha+\alpha \epsilon-1} \left. \frac{(1-z)^{n+1}}{n+1} \right|^{ (1-\epsilon) \bar{G}(\b)}_{\lambda (1-\epsilon) \bar{G}(\b)} 
 \end{align*}
 \begin{align*}
 & = \frac{\alpha \bar{G}(\b)^{\alpha(1+ \epsilon)-1} }{ n+1} 
\bigg[ \left(1-(1-\epsilon) \bar{G}(\b)\right)^{n+1}  \left. - (1-\lambda (1-\epsilon) \bar{G}(\b))^{n+1} \right] \\
 & \geq \frac{   \bar{G}(\b)^{\alpha(1+ \epsilon)-1}}{2 n}  \left(1-(1-\epsilon) \bar{G}(\b)\right)^{n+1} \left[1- \left( \frac{1-\lambda(1-\epsilon) \bar{G}(\b) }{1-(1-\epsilon) \bar{G}(\b)}\right)^{n+1} \right],
\end{align*}
where, for the second inequality, we use the standard property of slowly varying functions (see Theorem~1.5.6 on page 25 of \cite{BG87}) that $ \ell(x) \geq  x^{-\alpha \epsilon}$ for large $x$, and thus $\ell(\bar{G}(\b)^{-1}) \geq   \bar{G}(\b)^{\epsilon \alpha} $, whereas the last inequality is implied by $\alpha \geq 1$ and  $n+1 \leq 2 n$ for $n >1$. 

Next, using the inequalities $ e^{-2x} \leq (1-x) \leq e^{-x}$ for $x$ small enough, we have
\begin{align*}
  \left( \frac{1-\lambda (1-\epsilon) \bar{G}(\b) }{1-(1-\epsilon) \bar{G}(\b)}\right)^{n+1}  &\leq  \frac{e^{-\lambda(1-\epsilon) (n+1)\bar{G}(\b)}}{e^{- 2(1-\epsilon)(n+1) \bar{G}(\b)}}
 \leq  e^{- (\lambda-2)(1-\epsilon)n \bar{G}(\b)} \\ 
 & \leq  e^{- (\lambda-2)(1-\epsilon) \delta \log n} \leq 1/2,
\end{align*}
for all $n \geq n_0$, since we choose $\lambda>2$ and, by assumption, $n\bar{G}(\b) \geq  \delta \log n$. Therefore,
\begin{align*}
\Pr[N_\b > n] & \geq \frac{   \bar{G}(\b)^{\alpha(1+ \epsilon)-1} }{4n} \left(1-(1-\epsilon) \bar{G}(\b)\right)_.^{n+1}
\end{align*}
Then, by taking the logarithm,
\begin{align*}
\log \Pr[N_\b > n]  \geq &  - \log 4+ (\alpha(1+\epsilon)-1)\log \bar{G}(\b) + (n+1)\log (1-(1-\epsilon) \bar{G}(\b)) - \log n  \\
\geq &  -  \alpha \epsilon \log n  - (\alpha(1+\epsilon)-1)\log n  + (n+1)\log (1-(1-\epsilon) \bar{G}(\b)) - \log n ,
\intertext{where in the last inequality we used the assumption $\log \bar{G}(\b) \geq \log (\delta \log n)   - \log n \geq -\log n $, for $n \geq e^{e/\delta},$ and that $ \log 4 \leq \alpha \epsilon \log n $ for large $n$. Hence,}
\log \Pr[N_\b > n]  \geq &  - \alpha (1+2 \epsilon) \log n + (n+1) \log (1-(1-\epsilon) \bar{G}(\b)) \\
 \geq &  - \alpha (1+2\epsilon) \log n + n (1+2\epsilon) \log (1- \bar{G}(\b))\\
 = &  -(1+2\epsilon) \left[  \alpha  \log n - n \log (1- \bar{G}(\b)) \right].
 \end{align*}
Finally, dividing by $( \alpha \log n -n\log(1-\bar{G}(\b)) ) > 0$ and replacing $\epsilon$ with $\epsilon / 2$ yields
\begin{align*}
\frac {\log \Pr[N_\b > n]}{ \alpha \log n -n\log(1-\bar{G}(\b)) }  & \geq   -(1+\epsilon). 
\end{align*}

Symmetric arguments hold for the case where $\alpha < 1$. We omit the details.

Next, we prove the result for $n \bar G(\b) \leq \delta \log n$, for $n \geq n_0$, $\delta>0$. Note that the proof assumes any fixed $\delta >0$, and thus we can set  $\delta = \delta_\epsilon = \epsilon \alpha/ 4$. This assumption implies that $\b $ is large, say $\b \geq \b_0$, when $n$ is large, and thus the upper bound follows from Proposition~\ref{prop:UB}. For the lower bound, it is sufficient to prove \eqref{eq:thm1-a} instead of \eqref{eq:thm2-a} since for $\b$ large enough, $\alpha \log n \leq -n \log(1-\bar{G}(\b))+ \alpha \log n  \leq \alpha(1 +  \epsilon) \log n$. Without loss of generality, we may assume that $\epsilon <1/3$. Therefore, 
\begin{align*}
\Pr[N_\b > n]&= \E [1-\bar{G}(L_\b)]^n \\
& \geq  \E \left [(1-\bar{G}(L_\b))^n \textbf{1}(L_\b > x_0) \right] \\
& \geq (1 - \bar G(x_0))^n \Pr [\b \geq  L > x_0], \\
\intertext{and since $\bar G(\b) \leq \delta \log n/n$, by continuity of $G(x)$, there exist $x_0, x_1 (x_0 < x_1 \leq \b),$ such that $\bar G(x_0) = 2 \delta \log n /n$ and $\bar G(x_1) = \delta \log n /n$. Hence,}
\Pr[N_\b > n] & \geq \left(1 - \frac{2\delta \log n}{n} \right)^n \left[\bar F(x_0) - \bar F(x_1) \right] \\
& \geq e^{-4 \delta \log n}   \left[     \bar G (x_0)^{\alpha} \ell(\bar G(x_0)^{-1}) -   \bar G (x_1)^{\alpha} \ell(\bar G(x_1)^{-1})  \right], \\ 
\intertext{where we use our main assumption and the inequality $1-x \geq e^{-2x}$ for small $x$. Next, we observe that for all $x_0 < x < x_1$, $  (1-\epsilon) \ell(n \log^{-1} n)  \leq    \ell (i \delta n  \log^{-1} n)  \leq (1+\epsilon) \ell(n  \log^{-1} n), i = 1,2$. Thus,}
\Pr[N_\b > n]& \geq n^{-4 \delta} \left[ (1-\epsilon)  \ell(n \log^{-1} n ) \left( \frac{2 \delta \log n}{n} \right)^\alpha - (1+\epsilon)  \ell(n \log^{-1} n ) \left( \frac{ \delta \log n}{n} \right)^\alpha \right]  \\
&    \geq       n^{-4 \delta}  \frac{ \ell(n / \log n) \delta^\alpha  \log^\alpha n } {n^{\alpha} }  \left[2^ \alpha (1-\epsilon)^\alpha - (1+\epsilon)^ {\alpha}  \right]. 
\end{align*}
Next, since $\delta =  \epsilon \alpha / 4$, we have
\begin{align*}
\Pr[N_\b > n] & \geq   n^{-\alpha \epsilon}  \frac{ \ell(n / \log n) \delta^\alpha  \log^\alpha n } {n^{\alpha} }  h_\epsilon,  
\end{align*}
where we set $h_{\epsilon} = \delta^{\alpha} (2^\alpha (1-\epsilon)^\alpha - (1+\epsilon)^ {\alpha}) >0 $ since $\epsilon < 1/3$. Now, recalling the standard property of slowly varying functions (Theorem~1.5.6 on page 25 of \cite{BG87}) of $\ell(n)$ that $\ell(n) \geq n^{-\epsilon \alpha }$, for large $n$, we have
\begin{align*}
\Pr[N_\b > n]  & \geq  \frac{ h_{\epsilon}  \log^{\alpha(1+\epsilon)} n   } {n^{\alpha(1+2\epsilon) } }   \geq  \frac{h_{\epsilon} } {n^{\alpha(1+2\epsilon) } } ,
\end{align*}
since $\log n \geq 1$ for all $n>2$. 

And by taking the logarithm, we obtain 
 \begin{align*}
 \log \Pr[N_\b > n]& \geq \log h_{\epsilon} - \alpha (1+2\epsilon)\log n \\
 & \geq - \alpha(1+3\epsilon) \log n,
\end{align*}
since we can choose $n_0$ such that for all $n \geq n_0$, $\alpha \epsilon \log n \geq  -\log h_{\epsilon} $. Finally, dividing by $\alpha \log n > 0$, $n>1$, yields
\begin{align*}
 \frac{\log \Pr[N_\b > n]}{\alpha \log n} \geq  -(1+ 3\epsilon),
\end{align*}
which completes the proof after replacing $\epsilon$ with $\epsilon/3$.
\qed \end{proof}

\appendix
\section{A note on the uniform distribution of $F(L)$}
Here, we give a brief comment on the fact that $F(L)$ is uniformly distributed if $L$ is a continuous random variable, i.e., $F(x)$ is absolutely continuous. This statement is immediate when $F(x)$ is strictly increasing, e.g., Proposition 2.1 in Chapter 10 of \cite{Ross02}, since its inverse $F^{\leftarrow}(x)$ is well defined and $F(F^{\leftarrow}(x))=F^{\leftarrow}(F(x))=x$, meaning that, for $0 \leq x \leq 1$,  
\begin{align*}
\Pr[F(L) \leq x] = \Pr[L \leq F^{\leftarrow}(x) ]= F(F^{\leftarrow}(x)) = x.
\end{align*} 

However, when $F(x) $ has flat intervals, $F^{\leftarrow}(x)$ is not uniquely determined, e.g., we may use a standard generalized inverse: $F^{\leftarrow}(x) = \inf \{y : F(y) > x\}$. Now, we still have $F(F^{\leftarrow}(x))=x$, but $F^{\leftarrow}(F(x)) \neq x$, in general; the equality only holds if $x$ is a point of increase for $F(x)$. But, since $F(x)$ is absolutely continuous, we can exclude the flat regions of $F(x)$. Formally, the derivative $f(x) = F'(x)$ exists a.e.-Lebesgue, and where it does not, we can set $f(x) = 0$. Furthermore, $F(x) = \int_{-\infty}^x f(u) du$. Thus, since $F^{\leftarrow}(x)$ is strictly increasing, for $ 0 \leq x \leq 1$, 
\begin{align*}
\Pr[F(L) \leq x] &= \Pr[F^{\leftarrow} (F(L)) \leq F^{\leftarrow}(x) ] \\
&=  \int_{-\infty}^{\infty} \textbf{1}(F^{\leftarrow} (F(u)) \leq F^{\leftarrow}(x), f(u) >0 ) dF(u) = \Pr[L \leq F^{\leftarrow}(x)],
\end{align*} 
since $F(F^{\leftarrow}(u))=u$ when $f(u) >0$, implying that $\Pr[F(L)\leq x]=F(F^{\leftarrow}(x))=x$.

\end{document}